\newtheorem{theorem}{Theorem}[subsection]
\newtheorem{corollary}[theorem]{Corollary}
\newtheorem{proposition}[theorem]{Proposition}
\newtheorem{definition}[theorem]{Definition}
\theoremstyle{plain}
\theoremstyle{plain}
\theoremstyle{remark}
    \newtheorem{remark}[theorem]{Remark}
\newcommand{\EE}{\mathbb{E} }
\newcommand{\one}{\mathbbm{1} }
\newcommand{\PP}{\mathbb{P} }
\newcommand{\RR}{\mathbb{R} }
\newcommand{\ff}{\mathcal{F} }
\newcommand{\BS}{C_\mathrm{BS}}
\newcommand{\IBS}{Y_\mathrm{BS}}
\newcommand{\ignore}[1]{}
\begin{document}

\author{Michael R. Tehranchi \\
University of Cambridge }
\address{Statistical Laboratory\\
Centre for Mathematical Sciences\\
Wilberforce Road\\
Cambridge CB3 0WB\\
UK}
\email{m.tehranchi@statslab.cam.ac.uk}

\date{\today}
\thanks{\noindent\textit{Keywords and phrases:} semigroup with involution, 
implied volatility, peacock, lift zonoid, log-concavity}
\thanks{\textit{Mathematics Subject Classification 2010: 60G44, 91G20, 60E15, 26A51, 52A21, 20M20}   } 

\title{A Black--Scholes inequality: applications and generalisations}

\begin{abstract}  The space of call price
functions has a natural noncommutative semigroup structure with an involution. A basic example is the Black--Scholes
call price surface, from which an interesting   inequality for Black--Scholes
implied volatility is derived.   
The binary operation  is
compatible with the convex order, and therefore a one-parameter sub-semigroup gives rise to an arbitrage-free market model.  It is shown that each such one-parameter semigroup corresponds 
to a unique log-concave probability density, providing a family of tractable call price surface parametrisations
in the spirit of the Gatheral--Jacquier SVI surface. An explicit 
example is given to illustrate the idea. The key observation is
an isomorphism linking an initial call price curve to the lift zonoid of the
terminal price of the underlying asset.
\end{abstract}

\maketitle
\section{Introduction}\label{se:intro}
We define the Black--Scholes call price function $\BS: [0,\infty) \times [0,\infty) \to [0,1]$  by the formula
\begin{align*}
\BS(\kappa,y) &=  \int_{-\infty}^{\infty} (\varphi(z+y) - \kappa \ \varphi(z))^+ d z  \\
& = \left\{ \begin{array}{ll}
				 {\Phi}\big(-\frac{\log \kappa}{y} + \frac{y}{2}\big) - \kappa {\Phi}
				\big(-\frac{\log \kappa}{y} - \frac{y}{2}\big) & \mbox{ if } y > 0, \kappa > 0,\\
				(1-\kappa)^+ & \mbox{ if } y =0,  				\\
				1 & \mbox{ if } \kappa =0,
				\end{array} \right.
\end{align*}
where $\varphi(z) = \frac{1}{\sqrt{2\pi}} e^{-z^2/2}$ is the standard normal density and $\Phi(x) = \int_{-\infty}^x \varphi(z) d z$ is
its distribution function. 
Recall the financial context of this definition:
a market with a risk-free 
zero-coupon bond of unit face value, maturity $T$ and 
initial price $B_{0,T}$;
a  stock  with initial price $S_{0}$ that pays no dividend;
and a European call option written on the stock with  maturity $T$ and strike price $K$. 
In the Black--Scholes model, 
the initial  price $C_{0,T,K}$ of the  call option  
is given by the formula
$$
C_{0,T,K} = S_0 \   \BS\left( \frac{K B_{0,T} }{S_0 }  , \sigma \sqrt{T} \right),
$$
where    $\sigma$
is the volatility of the stock price.
 In particular,   the first argument of $\BS$ plays the role of the moneyness $\kappa=K B_{0,T}/S_0$
and the second argument  plays the role of the total standard deviation $y = \sigma \sqrt{T}$ of the terminal log stock price.

The starting point of this note is the following   observation.

\begin{theorem}\label{th:BS}  For $\kappa_1, \kappa_2 > 0$ and $y_1, y_2 > 0$   we have
$$
\BS(\kappa_1  \kappa_2, y_1 + y_2  ) \le \BS(\kappa_1, y_1  ) + \kappa_1 \BS(\kappa_2, y_2  )
$$
with equality if and only if
$$
-\frac{ \log \kappa_1}{y_1} - \frac{y_1}{2} = -\frac{   \log \kappa_2}{y_2} + \frac{y_2}{2}.
$$
\end{theorem}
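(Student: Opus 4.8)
The plan is to recast everything probabilistically and reduce the inequality to an elementary pointwise estimate, from which the equality case will also fall out. Writing $S_y = e^{yZ - y^2/2}$ with $Z \sim N(0,1)$ standard normal, the defining integral gives $\BS(\kappa,y) = \EE[(S_y - \kappa)^+]$. The elementary ingredient is the algebraic identity $st - \kappa_1\kappa_2 = (s-\kappa_1)t + \kappa_1(t-\kappa_2)$, valid for all reals, which combined with the subadditivity $(a+b)^+ \le a^+ + b^+$ and the positivity of $t$ and $\kappa_1$ yields the pointwise bound
$$ (st - \kappa_1\kappa_2)^+ \le t\,(s-\kappa_1)^+ + \kappa_1 (t - \kappa_2)^+ \qquad (s,t>0). $$

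First I would choose a coupling of two lognormals $(s,t)$, both driven by a single $Z$, that simultaneously realises all three terms. The naive attempt $s = S_{y_1}$, $t = S_{y_2}$ with a common $Z$ fails, because a product of two perfectly correlated unit-mean lognormals has mean $e^{y_1 y_2} \ne 1$, so $st$ is not distributed as $S_{y_1+y_2}$; independent copies instead give total variance $y_1^2 + y_2^2$ rather than $(y_1+y_2)^2$. The resolution is to keep $t = e^{y_2 Z - y_2^2/2} \overset{\mathrm d}{=} S_{y_2}$ (so $\EE\, t = 1$) but to shift the drift of $s$, setting $s = e^{y_1 Z - y_1 y_2 - y_1^2/2}$. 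A direct computation then gives $st = e^{(y_1+y_2)Z - (y_1+y_2)^2/2} \overset{\mathrm d}{=} S_{y_1+y_2}$, so taking $\EE$ of the first and last terms reproduces $\BS(\kappa_1\kappa_2, y_1+y_2)$ and $\kappa_1\BS(\kappa_2,y_2)$. The middle term is handled by the change of measure $\d\QQ = t\,\d\PP$: since $t>0$ has mean one, $\QQ$ is a probability measure under which $Z \sim N(y_2,1)$, and the drift shift was chosen precisely so that $s \overset{\mathrm d}{=} S_{y_1}$ under $\QQ$; hence $\EE[t(s-\kappa_1)^+] = \EE_\QQ[(s-\kappa_1)^+] = \BS(\kappa_1,y_1)$. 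Taking expectations of the pointwise bound then gives the stated inequality.

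For the equality case I would track when the pointwise bound is tight $\PP$-almost surely. Equality in $(a+b)^+ \le a^+ + b^+$ with $a = (s-\kappa_1)t$ and $b = \kappa_1(t-\kappa_2)$ requires $a$ and $b$ not to have strictly opposite signs; since $t,\kappa_1>0$ this means $s-\kappa_1$ and $t-\kappa_2$ must share a sign. Both events are half-lines in $Z$: one has $s \ge \kappa_1 \iff Z \ge \zeta_1$ and $t \ge \kappa_2 \iff Z \ge \zeta_2$ for explicit thresholds $\zeta_1,\zeta_2$. Because $Z$ has full support, the signs agree almost surely exactly when $\zeta_1 = \zeta_2$; if $\zeta_1 \ne \zeta_2$ the intermediate interval has positive probability and on it the bound is strict. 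Equating $\zeta_1 = \zeta_2$ and simplifying recovers precisely $-\frac{\log\kappa_1}{y_1} - \frac{y_1}{2} = -\frac{\log\kappa_2}{y_2} + \frac{y_2}{2}$, completing the characterisation.

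The main obstacle is the coupling in the second paragraph: the additive structure in $y$ (total standard deviation $y_1 + y_2$, not $\sqrt{y_1^2+y_2^2}$) rules out both the independent and the common-$Z$ martingale couplings, and the correct device is to sacrifice the martingale normalisation of $s$ under $\PP$ in exchange for restoring it, together with the required distributional identity, under the tilted measure $\QQ$. Once the coupling is in place, both the inequality and its equality case follow mechanically; I would double-check only the measure-zero boundary $\{Z = \zeta_1\}$ and the strictness on the interval between $\zeta_1$ and $\zeta_2$, which are routine.
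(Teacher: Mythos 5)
Your proof is correct, and it is essentially the direct argument the paper alludes to but deliberately omits (``it is fairly straight-forward to prove Theorem \ref{th:BS} directly''). The paper instead obtains Theorem \ref{th:BS} as the case $f=\varphi$ of the semigroup identity of Theorem \ref{th:peacock}: writing $C_1\bullet C_2(\kappa)=\inf_{\eta>0}[C_1(\eta)+\eta C_2(\kappa/\eta)]$, the identity $\BS(\cdot,y_1)\bullet\BS(\cdot,y_2)=\BS(\cdot,y_1+y_2)$ makes the displayed inequality the trivial bound of an infimum by its value at $\eta=\kappa_1$, and the equality case becomes the characterisation of the minimiser (Proposition \ref{th:minimiser}). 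The probabilistic engine inside that general proof (Theorem \ref{th:countermonotone}) is the same one you use --- couple both prices to a single normal $Z$, apply $(a+b)^+\le a^++b^+$ together with its equality criterion (no strictly opposite signs), and identify the remaining expectation as a call price --- but the bookkeeping differs: the paper takes the countermonotonic pair $S_1=e^{-y_1Z-y_1^2/2}$, $S_2^*=e^{y_2Z-y_2^2/2}$ and uses a change of variables to recognise $1-\EE[S_1\wedge(\kappa S_2^*)]$ as $\BS(\kappa,y_1+y_2)$, whereas you take a comonotone pair with a deliberately non-unit-mean $s$ so that the left-hand side is literally $\BS(\kappa_1\kappa_2,y_1+y_2)$ under $\PP$, and you spend the measure change $\d\QQ=t\,\d\PP$ on the cross term $\EE[t(s-\kappa_1)^+]=\BS(\kappa_1,y_1)$ instead; up to the reflection $Z\mapsto -Z$ and dividing your pointwise inequality by $t$, the two computations coincide. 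What your route buys is a short, self-contained, elementary proof of the Gaussian case, with the equality condition read off from almost-sure agreement of the thresholds $\zeta_1=\zeta_2$; what the paper's route buys is the same inequality for every log-concave density at once, together with the semigroup, involution and peacock structure that is the real subject of the paper.
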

While it is fairly straight-forward to prove Theorem \ref{th:BS} directly, the proof is omitted as 
it is a special case of  Theorem \ref{th:peacock} below.  
Indeed,  the purpose of this note is to try to understand the 
fundamental principle that gives rise to such an inequality.
As a hint of things to come, it is worth pointing out that   the expression $y_1+y_2$ 
appearing on the left-hand side of the inequality corresponds to the sum of the
standard deviations -- not the sum of the variances.  From this observation, 
it may not be surprising to see that a key idea underpinning Theorem \ref{th:BS} 
is that of adding comonotonic -- not independent -- normal random variables.  These
vague comments will be made precise in Theorem \ref{th:countermonotone} below.

Before proceeding, we re-express Theorem \ref{th:BS} in terms of the 
Black--Scholes implied total standard deviation function, defined for $\kappa > 0$ to be the inverse function 
$$
\IBS(\kappa, \cdot): [ (1-\kappa)^+, 1 ] \to [0,\infty]
$$
such that 
$$
y = \IBS(\kappa, c) \Leftrightarrow   \BS(\kappa,y) = c.
$$
In particular, the quantity $\IBS(\kappa,c)$ denotes the implied total standard deviation
 of an option of  moneyness $\kappa$
whose normalised   price is $c$.  
We will find it notationally
convenient to set 
$\IBS(\kappa,c) = \infty$ for $c \ge 1$.  With this notation, we
have the following interesting reformulation which requires no proof:

\begin{corollary}\label{th:ivcor}
For all $\kappa_1, \kappa_2 > 0$ and $(1-\kappa_i)^+ < c_i < 1 $ for $i=1,2$, we have
$$
\IBS(\kappa_1, c_1) + \IBS(\kappa_2, c_2) \le \IBS( \kappa_1 \kappa_2, c_1 + \kappa_1 c_2 )  
$$
with equality if and only if
$$
-\frac{ \log \kappa_1}{y_1} - \frac{y_1}{2} = -\frac{  \log \kappa_2}{y_2} + \frac{y_2}{2}.
$$
where $y_i = \IBS(\kappa_i, c_i)$ for $i=1,2$.
\end{corollary}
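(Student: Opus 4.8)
The plan is to invert the inequality of Theorem~\ref{th:BS} in its volatility argument. First I would record the one fact about $\BS$ that drives everything: for each fixed $\kappa>0$ the map $y \mapsto \BS(\kappa,y)$ is strictly increasing, carrying $[0,\infty)$ onto $[(1-\kappa)^+,1)$. This is just positivity of the Black--Scholes vega and is visible by differentiating the integral representation under the integral sign. Consequently its inverse $\IBS(\kappa,\cdot)$ exists on $[(1-\kappa)^+,1)$ and is itself strictly increasing, and extends by the stated convention $\IBS(\kappa,c)=\infty$ for $c\ge 1$.

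Next I would substitute $y_i=\IBS(\kappa_i,c_i)$ for $i=1,2$, so that $\BS(\kappa_i,y_i)=c_i$ by definition of the inverse. With this choice the right-hand side of Theorem~\ref{th:BS} is exactly $c_1+\kappa_1 c_2$, so the theorem reads
$$
\BS(\kappa_1\kappa_2,\,y_1+y_2)\le c_1+\kappa_1 c_2.
$$
Since the left-hand side is at least the intrinsic value $(1-\kappa_1\kappa_2)^+$, the number $c_1+\kappa_1 c_2$ lies in $[(1-\kappa_1\kappa_2)^+,\infty)$; if it is $\ge 1$ the corollary holds trivially because its right-hand side is then $\infty$, and otherwise it lies in the domain of $\IBS(\kappa_1\kappa_2,\cdot)$. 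Applying that increasing inverse to both sides and using that $\IBS(\kappa_1\kappa_2,\cdot)$ undoes $\BS(\kappa_1\kappa_2,\cdot)$ gives
$$
\IBS(\kappa_1,c_1)+\IBS(\kappa_2,c_2)=y_1+y_2\le \IBS(\kappa_1\kappa_2,\,c_1+\kappa_1 c_2),
$$
which is the claimed inequality.

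Finally, for the equality statement I would use strict monotonicity once more: equality above forces $\BS(\kappa_1\kappa_2,y_1+y_2)=c_1+\kappa_1 c_2$, that is, equality in Theorem~\ref{th:BS}, so the equality condition is inherited verbatim as the stated tangency relation in the $y_i$. I expect no genuine obstacle here---the whole argument is the elementary fact that a strictly increasing bijection preserves both strict inequalities and equalities. The only items requiring a word of care are the strict positivity of vega (so the inverse is order-preserving rather than merely defined) and the boundary bookkeeping when $c_1+\kappa_1 c_2\ge 1$; this is precisely why the author can assert that the corollary requires no proof.
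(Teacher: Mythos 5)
Your proof is correct and is exactly the route the paper intends: the paper presents the corollary as a reformulation of Theorem~\ref{th:BS} ``which requires no proof,'' meaning precisely the inversion via strict monotonicity of $y \mapsto \BS(\kappa,y)$ that you carry out. Your additional care about the boundary case $c_1 + \kappa_1 c_2 \ge 1$ and the inheritance of the equality condition just makes explicit what the paper leaves implicit.
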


To add some context, we recall the following related bounds on the function $\BS$ and $\IBS$;
see \cite[Theorem 3.1]{SIFIN}.

\begin{theorem}\label{th:SIFIN}
For all $\kappa > 0$, $y > 0$, and $0 < p < 1$ we have
$$
\BS(\kappa,y) \ge \Phi( \Phi^{-1}(p) + y) - p\kappa
$$
with equality if and only if
$$
p = \Phi\left(-\frac{\log \kappa}{y} - \frac{y}{2} \right).
$$
Equivalently, for all  $\kappa > 0$,  $(1-\kappa)^+ < c < 1$ and $0 < p < 1$ we have
$$
\IBS(\kappa,c) \le \Phi^{-1}(c + p\kappa) - \Phi^{-1}(p)
$$
where $\Phi^{-1}(u) = + \infty$ for $u \ge 1$.
\end{theorem}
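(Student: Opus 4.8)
The plan is to bound the integrand pointwise and then integrate, exploiting the elementary fact that for any real $x$ and any indicator value one has $(x)^+ \ge x \, \one_A$, with equality (off a null set) precisely when $A$ coincides with $\{x > 0\}$. Applying this inside the defining integral of $\BS$ with the one-parameter family of half-lines $A = \{z > a\}$, $a \in \RR$, gives
$$
\BS(\kappa,y) = \int_{-\infty}^\infty (e^{yz - y^2/2} - \kappa)^+ \varphi(z)\, \d z \ge \int_a^\infty (e^{yz - y^2/2} - \kappa)\, \varphi(z)\, \d z .
$$
The right-hand integral is explicit: using the identity $e^{yz - y^2/2} \varphi(z) = \varphi(z - y)$ on the first term and evaluating the second directly, I expect to obtain the closed form $\Phi(y - a) - \kappa \Phi(-a)$. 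The substitution $a = -\Phi^{-1}(p)$, i.e. $p = \Phi(-a)$, then rewrites this lower bound as $\Phi(\Phi^{-1}(p) + y) - p\kappa$, which is exactly the claimed inequality, with $p$ ranging over $(0,1)$ as $a$ ranges over $\RR$.

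For the equality clause I would track the equality condition in the pointwise bound: equality holds almost everywhere iff the threshold $a$ is the sign-change point of $z \mapsto e^{yz - y^2/2} - \kappa$, namely $a = \frac{\log \kappa}{y} + \frac{y}{2}$. Combining this with $a = -\Phi^{-1}(p)$ yields $\Phi^{-1}(p) = -\frac{\log \kappa}{y} - \frac{y}{2}$, equivalently $p = \Phi\big(-\frac{\log \kappa}{y} - \frac{y}{2}\big)$, as stated.

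Finally, for the equivalent statement about $\IBS$, I would set $y = \IBS(\kappa,c)$, so that $\BS(\kappa,y) = c$, and substitute into the already-established inequality to get $\Phi(\Phi^{-1}(p) + y) \le c + p\kappa$. When $c + p\kappa < 1$ I apply the strictly increasing map $\Phi^{-1}$ to both sides and rearrange to obtain $y \le \Phi^{-1}(c + p\kappa) - \Phi^{-1}(p)$; when $c + p\kappa \ge 1$ the right-hand side is $+\infty$ by the stated convention and there is nothing to prove.

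I do not anticipate a serious analytic obstacle here: the crux is the modelling choice of a linear-payoff-times-indicator lower bound (a sub-replicating digital/forward combination) together with the recognition that the resulting Gaussian integral is a shifted normal tail. The only points demanding care are the correct identification of the threshold $a$ with the free parameter $p$ and the handling of the boundary case $c + p\kappa \ge 1$ in the implied-volatility reformulation.
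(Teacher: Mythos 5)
Your proof is correct: the pointwise bound $(x)^+\ge x\,\one_A$ with $A=\{z>a\}$, the Gaussian identity $e^{yz-y^2/2}\varphi(z)=\varphi(z-y)$ giving the closed form $\Phi(y-a)-\kappa\Phi(-a)$, the reparametrisation $p=\Phi(-a)$, and the equality analysis (tightness a.e.\ forces $a$ to be the sign-change point of the integrand, since the excluded set $\{z:\,e^{yz-y^2/2}=\kappa\}$ is Gaussian-null) are all sound, as is the handling of the boundary case $c+p\kappa\ge 1$ in the $\IBS$ reformulation. Note, however, that the paper does not prove this theorem at all: it quotes it from \cite[Theorem 3.1]{SIFIN}. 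Within the paper's own framework the statement would instead drop out of the conjugate-duality machinery of Section \ref{se:proofs}: by Theorem \ref{th:formula} with $f=\varphi$ one has $\widehat{\BS(\cdot,y)}(p)=\Phi(\Phi^{-1}(p)+y)$, and since by definition $\hat C(p)=\inf_{\kappa\ge 0}[C(\kappa)+p\kappa]$, the inequality $\BS(\kappa,y)\ge \Phi(\Phi^{-1}(p)+y)-p\kappa$ is immediate, with the equality case read off from Proposition \ref{th:propsofChat}(3): $\PP(S>\kappa)\le p\le \PP(S\ge\kappa)$, which for the lognormal primal representation $S=e^{yZ-y^2/2}$ collapses to the single value $p=\Phi\bigl(-\tfrac{\log\kappa}{y}-\tfrac{y}{2}\bigr)$. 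Your sub-replication device is in fact the same Neyman--Pearson-style trick the paper deploys to prove Theorem \ref{th:zonoid}, so the two routes are close cousins; what yours buys is a completely elementary, self-contained verification requiring only one Gaussian integral, while the paper's route buys generality: the same argument shows $C_f(\kappa,y)\ge F(F^{-1}(p)+y)-p\kappa$ for every log-concave density $f$ with distribution function $F$, of which the Black--Scholes case $f=\varphi$ is merely the prototypical example.
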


In \cite{SIFIN}, Theorem \ref{th:SIFIN} was used to derive upper bounds on the
implied total standard deviation function $\IBS$ by selecting various values of $p$
to insert into the inequality. 

The
function $\Phi \big( \Phi^{-1}(\cdot) + y \big)$ has appeared elsewhere in various contexts.
For instance, it is the value function for a problem of maximising the probability
of hitting a target considered by Kulldorff \cite[Theorem 6]{kulldorff}.   (Also see the book of Karatzas\cite[Section 2.6]{karatzas}.) 
In insurance mathematics, the function is often called the Wang transform
and was proposed in \cite{wang} as a method of distorting a probability distribution in order to introduce a risk premium.  In a somewhat
unrelated context, Kulik \& Tymoshkevych \cite{kt} observed, 
while proving a certain log-Sobolev inequality,  that
the family of functions
$
\left( \Phi( \Phi^{-1}(\cdot) + y) \right)_{y \ge 0}
$
forms a semigroup under function composition.  
We   will see that this semigroup property is the 
essential idea of our proof of Theorem \ref{th:BS}  and its subsequent  generalisations.

The rest of this note is arranged as follows. In section \ref{se:property} we 
introduce a space of call price curves and explore some of its properties.
In particular, we will see that it has a natural noncommutative semigroup structure with an involution.
The binary operation has a natural financial interpretation as the maximum
price of an option to swap the one asset
for a fixed number of shares of a second asset. 
In section \ref{se:peacock}, we introduce a space of call price
surfaces and provide in Theorem \ref{th:char-surf} equivalent
characterisations in terms of either one supermartingale or two martingales.  Furthermore, it is shown that the binary operation is
compatible with the decreasing convex order, and therefore a one-parameter semigroup of the space of call
curves can be associated with an arbitrage-free
market model.  A main result of this article is Theorem \ref{th:classify}: each one-parameter semigroup corresponds to a unique (up to translation and scaling) log-concave  probability density, 
generalising the Black--Scholes call price surface and providing a family of reasonably 
tractable call surface parametrisations
in the spirit of the SVI surface. 
In section \ref{se:calibrating}, further 
properties of these call price surfaces, including the asymptotics
of their implied volatility, are explored. In addition, an explicit 
example is given to illustrate the idea, and is calibrated to real world call price
data.   In section \ref{se:proofs}, the proof of Theorem \ref{th:classify}
is given. The key observation is
 that the 
Legendre transform is an isomorphism converting the binary operation on
  call price curves to function composition.  The isomorphism has
  the additional interpretation as the lift zonoid of the
terminal price of the underlying asset.

\section{The algebraic properties of call prices}\label{se:property}
 \subsection{The space of call price curves}
For motivation, consider a market with two (non-dividend paying) assets whose
prices at time $t$ are
$A_t $ and $B_t$. We assume that both prices are always 
non-negative and that the initial prices $A_0$ and $B_0$ are strictly positive.  
We further assume that there exists a martingale deflator $Y =(Y_t)_{t \ge 0}$, that is, a positive adapted
process such that the processes $YA$ and $YB$ are both martingales. 
The assumption of the existence of a martingale deflator
ensures that there is no arbitrage in the market. (Conversely,
in discrete time, no arbitrage implies the existence a martingale deflator, even
if the market does not admit a num\'{e}raire portfolio;
see \cite{no-num}.)  

Now introduce an option to swap one share of asset
$A$ with $K$ shares of asset $B$ at a fixed time $T > 0$,
so the payout is $(A_T - K B_T)^+$.   If   the asset $B$ is a risk-free zero-coupon bond of maturity $T$ and unit face value, then the 
option is a standard call option. It will prove useful in our
discussion to let asset $B$
be arbitrary, but we shall still refer to this option as a call option.

There is no
arbitrage in the augmented market if the time $t$ price of this call option 
is
$$
C_{t,T, K} = \frac{1}{Y_t}\EE [ Y_T( A_T - K B_T)^+ | \ff_t]. 
$$
In particular, setting 
$$
\alpha = \frac{Y_T A_T}{Y_0 A_0} \ \mbox{ and } \  \beta  = \frac{Y_T B_T}{Y_0 B_0}
$$ 
the initial price of this option, normalised by the 
initial price of asset $A$ can be written as
$$
\frac{C_{0,T,K}}{A_0} = \EE[ (\alpha - \kappa \beta)^+ ].
$$
where the moneyness is given by
$$
 \kappa = \frac{K B_0}{A_0}.
 $$

The above discussion motivates the following definition:
\begin{definition}\label{de:C}  
A function $C:[0,\infty) \to [0, 1]$ is a 
call price curve iff there exist non-negative random variables 
 $\alpha$ and $\beta$ defined on some probability space such that
$$
\EE(\alpha) = 1 = \EE(\beta).
$$ 
and 
$$
C(\kappa) = \EE[ (\alpha - \kappa \beta)^+ ] \mbox{ for all } \kappa \ge 0,
$$
in which case the ordered pair $(\alpha, \beta)$ of random variables is called a basic representation of $C$.  
The set
of all   call price curves is denoted $\mathcal C$.   
\end{definition}

From a practical perspective, the normalised call price  $C(\kappa)$
is directly observed, while the law of the pair $(\alpha, \beta)$
is not.
Therefore, a theme of this note is to try to express notions in
terms of the call price curve.  Here is a first result
of this type.

\begin{theorem}\label{th:rep} Given a function 
 $C:[0,\infty) \to [0,1]$, the following are equivalent:
\begin{enumerate}
\item  $C \in \mathcal C$.
\item There exists a non-negative random variable $S$   with $\EE(S) \le 1$
such that
$$
C(\kappa) =  1 - \EE[ S \wedge \kappa ] \mbox{ for all } \kappa \ge 0.
$$
\item $C$ is convex and such that 
$C(\kappa) \ge (1-\kappa)^+$  for all $\kappa \ge 0$.
\end{enumerate}
Furthermore, in case (2) we have that
$$
\PP(S > 0 ) = - C'(0)  \mbox{ and } \EE(S ) = 1 - C(\infty).
$$ 
and more generally that
 $$
 \PP(S > \kappa ) = - C'(\kappa) \mbox{ for all } \kappa \ge 0,
 $$  where
 $C'$ denotes the right-derivative of $C$.
\end{theorem}
 
\begin{proof}
The implications (1)$\Rightarrow$(3) and (2)$\Rightarrow$(3) are
straightforward, so their proofs are omitted.  Furthermore, the claim
that the distribution of $S$ can be recovered from $C$ is essentially the
Breeden \& Litzenberger \cite{BL} formula.

 (3)$\Rightarrow$(2):  By convexity, 
the right-derivative $C'$ is defined everywhere and is non-decreasing and
right-continuous. Furthermore, since $(1-\kappa)^+ \le C(\kappa) \le 1$
for all $\kappa$ we have $-1 \le C'(\kappa) \le 0$ for all $\kappa$. 
Let $S$ be a random variable such that $\PP(S > \kappa) = - C'(\kappa)$.
Note that
\begin{align*}
\EE[ S \wedge \kappa ] &= \EE \int_0^\kappa \one_{\{ u < S \}} du \\
& = \int_0^\kappa \PP(S > u) du \\
& = 1 - C(\kappa)
\end{align*}
by Fubini's theorem and the absolute continuity of the convex function  $C$.  

It remains to show that either 
(2)$\Rightarrow$(1) or 
(3)$\Rightarrow$(1).  That is, we must  
construct a basic representation $(\alpha, \beta)$ from either the random variable $S$ or the function $C$.
 We will give a construction showing (2)$\Rightarrow$(1)
in the proof of  Theorem
\ref{th:bullet-S}, and a rather 
different construction showing (3)$\Rightarrow$(1) in the proof of 
Theorem \ref{th:char-surf}. To avoid repetition, we omit a construction here.
\end{proof}

By definition, a call price curve $C$ is determined by two
random variables   $\alpha$ and $\beta$.   However, the
distribution of the pair $(\alpha, \beta)$ cannot be inferred solely
from $C$.  
In contrast, Theorem \ref{th:rep} above says 
that a call price curve $C$ is also determined 
by a \textit{single} random variable  $S$, and furthermore, 
the law of $S$ is \textit{unique} and  can be recovered from $C$. This observation
motivates the following definition.

\begin{definition} Given a call price curve $C \in \mathcal C$, suppose that $S$ is a non-negative random 
variable such that
 $C(\kappa) = 1 - \EE[ S \wedge \kappa]$ for all $\kappa \ge 0$. 
 Then $S$ is called a primal representation of   $C$.
 \end{definition}
 
 \begin{remark}  As hinted by the name \textit{primal}, we will
 shortly introduce a \textit{dual} representation.
 \end{remark}

Figure \ref{fi:C} plots the graph of a typical element $C \in \mathcal C$.

\begin{figure} 
\caption{The graph of a typical function $C \in \mathcal C$}
	  \label{fi:C}
		\includegraphics[trim = 0cm 0 0cm 0, clip, scale = 0.40]{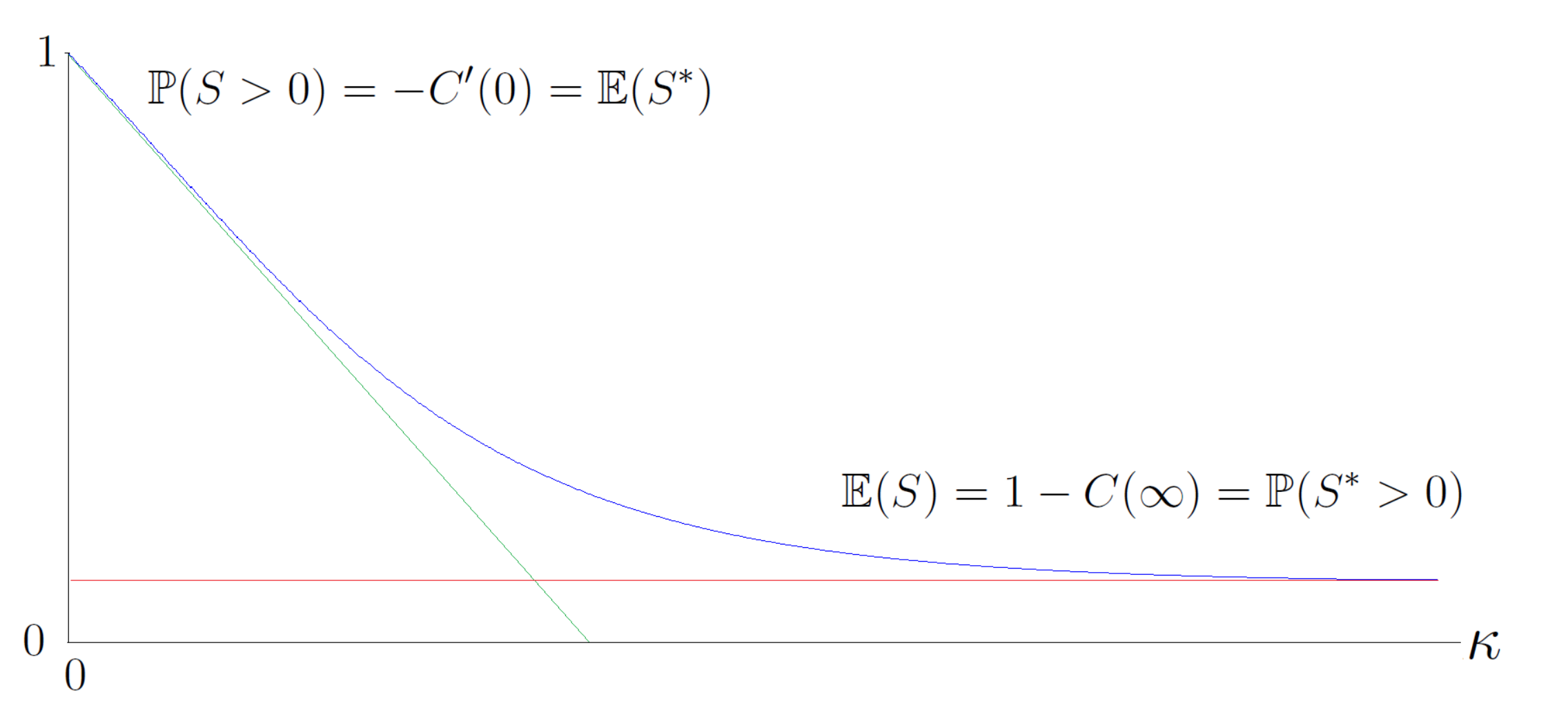}		
	\end{figure}

\begin{remark} 
An example of an element of $\mathcal C$ is the Black--Scholes call price function
$\BS(   \cdot , y )$ for any $y \ge 0$.   A primal representation is 
$$
S^{(y)} = \frac{\varphi(Z+y)}{\varphi(Z)} =  e^{-y Z - y^2/2}
$$
where $Z \sim N(0,1)$ has the standard normal distribution.
\end{remark}

\begin{remark}
We note that there are alternative financial interpretations
of call price curves  $C \in \mathcal C$ in the case $C(\infty) > 0$.
One popularised by  Cox \& Hobson \cite{CH} is to model the
primal representation as the terminal price $S$ of an asset
experiencing a bubble in the sense that the 
price process discounted by the price of the risk-free $T$-zero coupon
bond is a strictly local martingale under a fixed $T$-forward measure.
For this interpretation, the option payout must be modified: rather
than the payout of standard (naked) call option, in this interpretation
the quantity $C(\kappa)$ models the normalised price of a  fully collateralised (covered) call 
with payout   
$$
(S - \kappa)^+ + 1 - S = 1 - S\wedge  \kappa.
$$  

In my view, there are two related shortcomings of this interpretation. 
Firstly, this type of bubble phenomenon can only arise in continuous time models, since in discrete time non-negative local martingales are
necessarily true martingales.  Secondly, in the case  $\EE(S) < 1$ where the underlying stock is not priced by expectation, it is not clear from
a modelling perspective why the market should then price the call option by expectation $C(\kappa) = \EE[ (S - \kappa)^+ + 1 - S].$  Both shortcomings
highlight the subtlety of continuous time arbitrage theory, in particular,
the sensitive dependence on the choice of num\'eraire on the definition of 
arbitrage (and related arbitrage-like conditions).
\end{remark}


\subsection{The involution}  

There is a natural involution on the
space of call prices:

\begin{definition}  
Given a call price curve $C \in \mathcal C$ with basic representation
$(\alpha, \beta)$, the function $C^*$ is the call price curve with 
basic representation $(\beta, \alpha)$.
\end{definition}

This leads to a straightforward financial interpretation of the involution. 
As described above, we may think of $C(\kappa)$ as the initial price,
normalised by $A_0$,  
of the option to swap one share of asset $A$ for $K$ shares of asset $B$, where
$K B_0 = \kappa A_0$. Then $C^*(\kappa)$ is the initial   price, normalised by
$B_0$, of the option to swap one share of asset $B$ for $K^*$ shares of asset $A$, where
$K^* A_0 = \kappa B_0$.

 We now record a fact about this involution ${ }^*$, expressed 
 directly in terms of call prices.  The proof is a straightforward
 verification, and hence omitted.

\begin{theorem}\label{th:involution}
Fix $C \in \mathcal C$.  Then
$C^*(0) = 1$ and
$$
C^*(\kappa) = 1 - \kappa + \kappa C(1/\kappa) \mbox{ for all } \kappa > 0.
$$
\end{theorem}

\begin{remark}
As an example, notice for the Black--Scholes call function 
we have
$$
 \BS(   \cdot, y)^* =  \BS(   \cdot, y) \mbox{ for all } y \ge 0
$$ 
 by the classical put-call symmetry formula.
\end{remark}

\begin{remark}
The function $C^*$ is  related to the well-known perspective function of the convex function $C$ 
defined by $(\eta, \kappa) \mapsto \eta \ C( \kappa/\eta)$;
see, for instance, the book of  Boyd \& Vanderberghe \cite[Section 3.2.6]{BoydVanderberghe}.
\end{remark}

As hinted at above, we can define another  random variable in terms
of this involution:

\begin{definition}  Given a call price curve $C \in \mathcal C$, 
a non-negative random variable $S^*$ is a dual
representation of $C$ iff $S^*$ is a primal representation of the call
price curve $C^*$. 
\end{definition}

That this dual random variable  should be called a representation of a call price is 
due to the following observation.  Again the proof is straightforward
and hence omitted.

\begin{theorem}\label{th:dual} Given a call price 
$C \in \mathcal C$ with dual representation $S^*$ we have
$$
C(\kappa) = \EE[ (1 - S^* \kappa)^+ ] = 1 - \EE[ 1 \wedge (S^* \kappa)] \mbox{ for all } \kappa \ge 0.
$$
In particular, we have
$$
\PP(S^* > 0) = 1- C(\infty) \mbox{ and } \EE(S^*) = - C'(0).
$$
Finally, for all $\kappa \ge 0$ we have
\begin{align*}
C(\kappa ) &= \PP(S^* < 1/\kappa ) - \kappa  \PP(S > \kappa ) \\
& = \PP(S^* \le 1/\kappa ) - \kappa  \PP(S \ge \kappa ).
\end{align*}
\end{theorem}


\begin{remark}   See the papers of De Marco, Hillairet \&  Jacquier
\cite{DM-H-J} and Jacquier \& Keller-Ressel \cite{J-KR} for a related
financial interpretation of the relationship between the 
primal and dual representations in terms of a continuous time market
possibly experiencing a bubble \`a la Cox \& Hobson.
\end{remark}

\subsection{The binary operation}\label{se:binary} 
We have introduced one algebraic operation, the involution ${}^*$,
to the set of call price curves.  We now come to the second algebraic operation which will help to 
contextualise the  Black--Scholes inequality of Theorem \ref{th:BS}.
To motivate it, consider a market with three assets with time $t$ prices
$A_{1,t}, A_{2,t}$ and $B_t$.   We know the initial cost of an 
option to swap one share of asset $A_1$ with $H_1$ shares of asset $B$,
as well as the initial cost of an 
option to swap one share of asset $B$ with $H_2$ shares of asset $A_2$, for
various values of $H_1$ and $H_2$, where all of the options mature
at a fixed date $T>0$.   Our goal is the find an upper bound
on the cost of an option to swap one share of asset $A_1$ for $K$ shares
of asset $A_2$, for the same maturity date $T$.

\begin{definition}
For call price curves $C_1, C_2 \in \mathcal C$, define a binary operation $\bullet$ on $\mathcal C$  by
$$
C_1  \bullet C_2 (\kappa ) = \sup_{\alpha_1, \beta, \alpha_2 }
\EE[ (\alpha_1 - \kappa \alpha_2)^+ ]
$$
where the supremum is taken over non-negative random variables $\alpha_1,
 \beta,  \alpha_2$ defined on the same probability space such that $(\alpha_1, \beta)$ is a basic representation of $C_1$ and $(\beta, \alpha_2)$
is a basic representation of $C_2$.
\end{definition}

At this stage, it is not immediately clear that given two call price curves 
$C_1$ and $C_2$ one can find 
a triple $(\alpha_1, \beta, \alpha_2)$ satisfying the definition
of the binary operation $\bullet$, and in principle, we should 
complete the definition with the usual convention that $\sup \emptyset = - \infty$.  Fortunately, this caveat is not necessary as can be
deduced from the following result:

\begin{theorem}\label{th:bullet-S}
For call price curves $C_1, C_2 \in \mathcal C$ we have
$$
C_1  \bullet C_2 (\kappa ) = 
 \sup_{S_1, S_2^*} \left\{
1 - \EE[ S_1 \wedge (S_2^* \kappa)] \right\},
$$
where the supremum is taken over random variables $S_1$ and $S_2^*$ defined
on the same space, where $S_1$ is a primal representation of $C_1$
and $S_2^*$ is a dual representation of $C_2$.
\end{theorem}

\begin{proof} First, let $S_1$ be a primal representation of $C_1$
and $S_2^*$ be a dual representation of $C_2$, defined on the same probability
space.  We will exhibit random variables $(\alpha_1, \beta, \alpha_2)$ such that $(\alpha_1, \beta)$ is a basic representation of $C_1$ and 
$(\beta, \alpha_2)$ is a basic representation of $C_2$.

For the construction, we introduce Bernoulli random variables $\gamma_1, 
\gamma_2, \delta_1, \delta_2$, independent of $(S_1, S_2^*)$ and each other,
with  
$$
\PP(\gamma_1=1) = \EE(S_1), \ \PP(\gamma_2 = 1) = \EE(S_2^*)
\mbox{ and } \PP(\delta_1 = 1) =  \PP(\delta_2 = 1) = \tfrac{1}{2}.
$$ 

If
$\PP(S_1= 0) < 1$ then set 
$$
a_1 = \frac{S_1}{\EE(S_1)} \mbox{ and } b_1 = \frac{\gamma_1}{\EE(S_1)}
$$
and if $S_1=0$ almost surely, set $a_1= 2 \delta_1$ and $b_1=2(1-\delta_1)$.
Similarly, if 
$\PP(S_2^*= 0) < 1$ then set 
$$
a_2 = \frac{S_2^*}{\EE(S_2^*)} \mbox{ and } b_2 = \frac{\gamma_2}{\EE(S_2^*)}
$$
and if $S_2^*=0$ almost surely, set $a_2= 2 \delta_2$ and $b_2=2(1-\delta_2)$.
Finally set
$$
\alpha_1 = a_1 b_2, \ \ \beta = b_1 b_2, \ \ \alpha_2 = a_2 b_1.
$$
It is easy to check that the triplet $(\alpha_1, \beta, \alpha_2)$ is
the desired representation.  This shows
$$
C_1  \bullet C_2 (\kappa ) \ge
 \sup_{S_1, S_2^*} \left\{
1 - \EE[ S_1 \wedge (S_2^* \kappa)] \right\},
$$

For the reverse inequality,  given a basic representation  $(\alpha_1, \beta)$   of $C_1$ and 
  a basic representation $(\beta, \alpha_2)$ of $C_2$ defined on the same 
  probability space $(\Omega, \ff, \PP)$, we let
  $$
  S_1 = \frac{\alpha_1}{\beta} \one_{\{\beta > 0  \} }
  \mbox{ and }  S_2^* = \frac{\alpha_2}{\beta} \one_{\{\beta > 0  \} } 
$$
and an absolutely continuous measure $\PP^\beta$ by
$
\frac{d \PP^\beta}{d \PP} = \beta.
$
It is easy to check that $S_1$ is a primal representation of $C_1$
and $S_2^*$ is a dual representation of $C_2$ under $\PP^\beta$ and that
\begin{align*}
\EE[ (\alpha_1 - \kappa \alpha_2)^+ ] & \le 1 - \EE  [ \alpha_1 \wedge (\alpha_2 \kappa) \one_{\{\beta > 0  \}}] \\
& = 1 - \EE^{\beta} [ S_1 \wedge (S_2^* \kappa)].
\end{align*}
\end{proof}

Given the laws of two random variables $X_1$ and $X_2$ and a convex
function $g$, it is well-known that the quantity
$$
\EE[ g(X_1 + X_2) ]
$$
is maximised when $X_1$ and $X_2$ are comonotonic.   See, for instance,
the paper of Kaas--Dhaene--Vyncke--Goovaerts-- Denuit \cite{KDVGD} for a proof.
By rewriting the expression
$$
1 -  S_1 \wedge (S_2^* \kappa) = (S_1 - \kappa S_2^*)^+  + 1 - S_1
$$ 
we see that the supremum defining the 
binary operation $\bullet$ is achieved 
when $S_1$ and $S^*_2$ are countermonotonic. 
We will recover this fact in the following result,
which also continues our theme of expressing notions directly
in terms of the call prices. In this case, the binary operation $\bullet$
can be expressed via a minimisation problem:

\begin{theorem} 
\label{th:countermonotone}
Let $S_1$ be a primal representation of $C_1 \in \mathcal C$, and $S_2^*$ a dual
representation of $C_2 \in \mathcal C$, where  $S_1$ and $S_2^*$ are defined on the same probability space. 
Then
$$
1 - \EE[ S_1 \wedge (\kappa S_2^*)] 
\le C_1(\eta) + \eta \ C_2(\kappa/\eta)
$$
for all $\kappa\ge 0$ and $\eta \ge 0$, with convention $0 \ C_2(\kappa/0) = 0$.  There is equality if   the following
hold true: 
\begin{enumerate}
\item $S_1$ and $S_2^*$ are countermonotonic, and
\item $\PP(S_1 < \eta) \le  \PP(S_2^* \ge \eta/\kappa) $ and  $
\PP(S_1 \le \eta)  \ge \PP(S_2^* > \eta/\kappa).$
\end{enumerate}

In particular, we  have
$$
C_1 \bullet C_2(\kappa) = \inf_{\eta \ge 0} [C_1(\eta) + \eta C_2(\kappa/\eta)]
\mbox{ for all } \kappa \ge 0.
$$
\end{theorem}

\begin{proof}  
Recall  that for real $a,b$ we have
$$
(a+b)^+ \le a^+ + b^+
$$
with equality if   $ab \ge 0$.  Hence, fixing $\kappa \ge 0$,
 we have
\begin{align*}
1 - \EE[ S_1  \wedge (S_2^* \kappa) ] & = \EE[ (S_1 - \kappa S^*_2)^+ ] + 1 - \EE(S_1) \\
& \le \EE[ (S_1 - \eta)^+] +    1 - \EE(S_1) + \EE[ ( \eta - \kappa S^*_2)^+ ] \\& =  C_1(\eta) + \eta C_2(\kappa/\eta). 
\end{align*} 
for all $\eta \ge 0$.

Now pick $\eta \ge 0$ such that  
$$
\PP(S_1 < \eta) \le  \PP(S_2^* \ge \eta/\kappa) 
$$
and
$$
\PP(S_1 \le \eta)  \ge \PP(S_2^* > \eta/\kappa).
$$
Also assume that   $S_1$ and $S_2^*$ are countermonotonic so that
$$
\{S_1 < \eta \} \subseteq \{ S_2^* \ge \eta/\kappa \} 
$$
and
$$
\{ S_1 \le \eta \} \supseteq \{ S_2^* > \eta/\kappa \}.
$$
Notice that in this case, we have
$$
(S_1 - \eta)( \eta - \kappa S^*_2) \ge 0 \mbox{ almost surely}
$$
and hence there is equality in the inequality above.
\end{proof}

\begin{remark}
This  result 
is related to the upper bound on basket options found
by Hobson, Laurence \& Wang \cite[Theorem 3.1]{HLW}.  
\end{remark}

\begin{remark}
Given the conclusion of Theorem \ref{th:countermonotone} 
we caution that the operation
$\bullet$ is not the well-known inf-convolution $\square$;    however, we will see in section \ref{se:inf-conv} below
that $\bullet$ is related to the inf-convolution $\square$ via an exponential map.
\end{remark}

In light of the formula for the binary operation $\bullet$ appearing
in Theorem  \ref{th:countermonotone}, 
the Black--Scholes inequality of  Theorem \ref{th:BS} 
amounts to
the claim that for $y_1, y_2 \ge 0$ we have
$$
\BS(  \cdot, y_1 ) \bullet \BS(  \cdot, y_2 ) = \BS(   \cdot, y_1+y_2 ).
$$ 
This is a special case of Theorem \ref{th:peacock}, stated and proven below.

We now come to the key observation of this note.  
To state it, we distinguish two particular elements $E, Z \in \mathcal C$ defined by 
$$
E(\kappa) = (1-\kappa)^+ \mbox{ and } Z(\kappa) = 1 \mbox{ for all } \kappa \ge 0.
$$
Note that the random variables representing $E$ and $Z$ are constant,
with $S = 1= S^*$ representing $E$ and $S=0=S^*$ representing $Z$.
The following result shows that $\mathcal C$ is a noncommutative
semigroup with respect to  $\bullet$ with involution ${}^*$, where $E$ is the identity element $Z$ is the absorbing element.  The proof
is straightforward, and hence omitted.

\begin{theorem}\label{th:semigroup} For every $C, C_1, C_2, C_3 \in \mathcal C$ we have
\begin{enumerate}
\item   $
C_1  \bullet C_2 \in \mathcal C.
$
\item   $
C_1 \bullet  (C_2  \bullet  C_3) = (C_1  \bullet   C_2 )  \bullet  C_3.
$
\item
$(C_1 \bullet C_2)^* = C_2^* \bullet C_1^*$.
\item  $
E \bullet C = C  \bullet E = C.
$
\item
$Z \bullet C = C \bullet Z = Z.$
\end{enumerate}
\end{theorem}

We conclude this section by introducing two useful subsets of the set
of call price curves.

\begin{definition} Let
$$
\mathcal C_+ = \{ C \in \mathcal C: C'(0) = - 1 \}.
$$
and
$$
\mathcal C_1 = \{ C \in \mathcal C: C(\infty) =  0  \}.
$$
That is, fix a call price curve $C \in \mathcal C$  with primal
representation $S$ and dual representation $S^*$.
The call price curve $C$ is in $ \mathcal C_+$ if and only if 
$\PP(S > 0 ) = \EE(S^*) =  1$, while    $C$ is in $\mathcal C_1$  
if and only if $\PP(S^* > 0 ) = \EE(S) =  1$. 
\end{definition}

\begin{remark}
 As an example, notice that for the Black--Scholes call 
function we have
$$
\BS(   \cdot, y) \in C_1 \cap C_+ \mbox{ for all } y \ge 0.
$$
 \end{remark}
 
 The subsets $\mathcal C_+$ and $\mathcal C_1$ are closed
with respect to the binary operation.
\begin{proposition}\label{th:closed}
Given $  C_1, C_2 \in \mathcal C$ we have 
\begin{enumerate}
\item  $C_1 \bullet C_2 \in \mathcal C_1$ if and only
if both $C_1 \in \mathcal C_1$ and $C_2 \in \mathcal C_1$.
\item $C_1 \bullet C_2 \in \mathcal C_+$ if and only
if both $C_1 \in \mathcal C_+$ and $C_2 \in \mathcal C_+$.
\end{enumerate}
\end{proposition}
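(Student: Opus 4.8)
The plan is to reduce both equivalences to a computation of the two ``boundary'' quantities $(C_1 \bullet C_2)(\infty)$ and $(C_1 \bullet C_2)'(0)$. By Theorem \ref{th:semigroup} we know $C_1 \bullet C_2 \in \mathcal{C}$, so these are well defined, and by the definitions of $\mathcal{C}_1$ and $\mathcal{C}_+$ membership is exactly the statement that $(C_1 \bullet C_2)(\infty) = 0$, respectively $(C_1 \bullet C_2)'(0) = -1$. Throughout I would fix a primal representation $S_1$ of $C_1$ and a dual representation $S_2^*$ of $C_2$ and use the identity from the proof of Proposition \ref{th:minimiser},
\[
C_1(\eta) + \eta C_2(\kappa/\eta) = 1 - \EE(\eta \wedge S_1) + \EE[(\eta - \kappa S_2^*)^+] =: g_\kappa(\eta),
\]
so that $(C_1 \bullet C_2)(\kappa) = \inf_{\eta > 0} g_\kappa(\eta)$. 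I would also recall the dictionary $\PP(S_1 > 0) = -C_1'(0)$, $\EE(S_2^*) = -C_2'(0)$ and $C_2(\infty) = 1 - \EE(S_2) = \PP(S_2^* = 0)$, so that the target conditions read $C_2 \in \mathcal{C}_1 \Leftrightarrow \PP(S_2^* = 0) = 0$, while $C_1 \in \mathcal{C}_+ \Leftrightarrow \PP(S_1 > 0) = 1$ and $C_2 \in \mathcal{C}_+ \Leftrightarrow \EE(S_2^*) = 1$.

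For part (1) I would send $\kappa \uparrow \infty$. Since $g_\kappa(\eta)$ is non-increasing in $\kappa$ for each fixed $\eta$, and $C_1 \bullet C_2$ is itself non-increasing, limit and infimum may be exchanged, giving
\[
(C_1 \bullet C_2)(\infty) = \inf_\kappa \inf_\eta g_\kappa(\eta) = \inf_{\eta > 0} \lim_{\kappa \to \infty} g_\kappa(\eta) = \inf_{\eta > 0}\left[ C_1(\eta) + C_2(\infty)\, \eta \right],
\]
where dominated convergence (using $0 \le (\eta - \kappa S_2^*)^+ \le \eta$) yields $\EE[(\eta - \kappa S_2^*)^+] \to \eta\, \PP(S_2^* = 0) = \eta\, C_2(\infty)$. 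If $C_2 \in \mathcal{C}_1$ then $C_2(\infty) = 0$ and the infimum collapses to $\inf_\eta C_1(\eta) = C_1(\infty)$, which vanishes exactly when $C_1 \in \mathcal{C}_1$. If instead $C_2(\infty) > 0$, then $\eta \mapsto C_1(\eta) + C_2(\infty)\eta$ is continuous, coercive, and strictly positive for $\eta > 0$ (as $C_1 \ge 0$), with limit $C_1(0) = 1$ at $\eta = 0$, so its infimum is strictly positive and $C_1 \bullet C_2 \notin \mathcal{C}_1$. This settles (1).

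For part (2) I would extract the right derivative at $0$. Writing $\Psi = 1 - C_1 \bullet C_2$, which is concave with $\Psi(0) = 0$, the slope $\Psi(\kappa)/\kappa$ is non-increasing, so $-(C_1 \bullet C_2)'(0) = \Psi'(0^+) = \sup_{\kappa > 0} \Psi(\kappa)/\kappa$. Using $1 - (C_1\bullet C_2)(\kappa) = \sup_\eta[\EE(\eta\wedge S_1) - \EE[(\eta-\kappa S_2^*)^+]]$ and substituting $\eta = \kappa t$ turns this into a double supremum evaluated by monotone convergence as $\kappa \downarrow 0$ (since $t \wedge (S_1/\kappa) \uparrow t\, \one_{\{S_1 > 0\}}$):
\[
-(C_1 \bullet C_2)'(0) = \sup_{\kappa > 0} \sup_{t > 0}\left\{ \EE\big[ t \wedge (S_1/\kappa)\big] - \EE[(t - S_2^*)^+] \right\} = \sup_{t > 0}\left\{ t\, \PP(S_1 > 0) - \EE[(t - S_2^*)^+]\right\}.
\]
Setting $p = \PP(S_1 > 0)$ and rewriting the bracket as $\EE[t \wedge S_2^*] - (1 - p)t$, one sees it is bounded above by $\EE(S_2^*) \le 1$, with value $1$ attained (in the limit $t \uparrow \infty$) precisely when $p = 1$ and $\EE(S_2^*) = 1$; if $p < 1$ the term $(1 - p)t$ forces the supremum strictly below $1$, and if $\EE(S_2^*) < 1$ the bound $\EE[t \wedge S_2^*] \le \EE(S_2^*)$ does. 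Hence $-(C_1 \bullet C_2)'(0) = 1$ iff $\PP(S_1 > 0) = 1$ and $\EE(S_2^*) = 1$, i.e. iff $C_1, C_2 \in \mathcal{C}_+$.

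The \emph{main obstacle}, and the step deserving the most care, is the justification of the interchange of limit/derivative with infimum. In (1) the monotonicity of $g_\kappa(\eta)$ in $\kappa$ makes this routine; in (2) the key device is to use concavity of $1 - C_1 \bullet C_2$ to replace the derivative at $0$ by a supremum of slopes, after which the rescaling $\eta = \kappa t$ and a single monotone-convergence argument do all the work and avoid any delicate uniform estimate. A secondary point is the strictness in the ``only if'' directions, namely that the boundary value is strictly positive (resp. strictly below $1$) when one hypothesis fails; these follow from coercivity and the elementary bounds above. I would also note that part (2) is the image of part (1) under the order-reversing involution swapping primal and dual representations, which interchanges $\mathcal{C}_1$ and $\mathcal{C}_+$ and reverses $\bullet$, giving an alternative derivation once that involution is established.
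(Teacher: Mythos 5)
Your proof is correct, but it takes a genuinely different route from the paper. The paper never touches the primal/dual representations here: it works entirely on the conjugate side, using Theorem \ref{th:isomorphism} and the fact (Proposition \ref{th:propsofChat}) that $C \in \mathcal C_1 \Leftrightarrow \hat C(0)=0$. Part (1) is then a one-line estimate, $\widehat{C_1 \bullet C_2}(0) = \hat C_1(\hat C_2(0)) \ge \max\{\hat C_1(0), \hat C_2(0)\}$, valid because $\hat C_1$ is non-decreasing and $\hat C_1(p)\ge p$; and part (2) is obtained for free from part (1) via the involution, since $C \in \mathcal C_+ \Leftrightarrow C^* \in \mathcal C_1$ and $(C_1\bullet C_2)^* = C_2^* \bullet C_1^*$ (Proposition \ref{th:involution}) --- exactly the alternative you sketch in your final remark. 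By contrast, you compute the two boundary quantities $(C_1\bullet C_2)(\infty)$ and $-(C_1\bullet C_2)'(0)$ directly from the representation $C_1\bullet C_2(\kappa)=\inf_{\eta>0}\{1-\EE(\eta\wedge S_1)+\EE[(\eta-\kappa S_2^*)^+]\}$, with the interchanges justified by monotonicity in $\kappa$ (part (1)) and by concavity of $1-C_1\bullet C_2$ plus the rescaling $\eta=\kappa t$ and monotone convergence (part (2)); I checked the limit computations, the strict-positivity argument when $C_2(\infty)>0$, and the bound $\sup_t\{tp-\EE[(t-S_2^*)^+]\}\le \min\{p,\EE(S_2^*)\}$-type estimates, and they all hold. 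What your approach buys is explicitness: you obtain the formulas $(C_1\bullet C_2)(\infty)=\inf_{\eta>0}[C_1(\eta)+C_2(\infty)\eta]$ and $-(C_1\bullet C_2)'(0)=\sup_{t>0}\{t\,\PP(S_1>0)-\EE[(t-S_2^*)^+]\}$, which quantify how membership fails, and the argument needs no conjugation machinery beyond Theorem \ref{th:semigroup}(3) (which you do need, for convexity and monotonicity of $C_1\bullet C_2$, and whose proof in the paper itself goes through the isomorphism). What the paper's route buys is brevity and symmetry: once $\hat{\ }$ and ${}^*$ are in place, both parts are immediate. One terminological nitpick: the involution ${}^*$ is not order-reversing on $(\mathcal C,\le)$ --- it preserves the pointwise order; what it reverses is the order of the factors in $\bullet$, as your parenthetical correctly states.
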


\begin{proof}  By Theorem 
\ref{th:countermonotone} we have
$$
C_1 \bullet C_2(\kappa) = 1 - \EE[ S_1 \wedge (\kappa S_2^*)] \mbox{ for all }
\kappa \ge 0
$$
where $S_1$ is a primal representation of $C_1$, where
$S_2^*$ is a dual representation of $C_2$ and $S_1$ and $S_2^*$ are
countermonotonic.  For implication (1) note that 
$$
\EE[ S_1 \wedge (\kappa S_2^*)] \to 1 \mbox{ as } \kappa \to \infty
$$
if and only if 
$$
\EE(S_1) = 1 \mbox{ and } \PP(S_2^* > 0) = 1.
$$
For implication (2), apply Theorem \ref{th:semigroup} (3) and the fact that
$(\mathcal C_+)^* = \mathcal C_1$.
\end{proof}

\section{One-parameter semigroups, peacocks and lyrebirds}\label{se:peacock}
\subsection{The space of call price surfaces}\label{se:surface}
With the motivation at the beginning of section \ref{se:property} 
we consider the family of prices of options when the maturity 
date is allowed to vary.   We now introduce the following definition:

\begin{definition} A call price surface is a function $C:[0,\infty) 
\times [0,\infty) \to [0,1]$ such that there exists an
 pair of non-negative martingales
$(\alpha_t, \beta_t)_{t \ge 0}$ such that 
$$
\alpha_0 = 1 = \beta_0
$$
and 
$$
C(\kappa, t) = \EE[ (\alpha_t - \kappa \beta_t)^+ ] \mbox{
for all } \kappa \ge 0, t \ge 0.
$$
\end{definition}

Our goal is to understand the structure the space of call price surfaces, and relate this structure
to the binary operation $\bullet$ introduced in the last section.

\begin{theorem}\label{th:char-surf}
 Given a function $C:[0,\infty)\times[0,\infty) \to [0,1]$
the following are equivalent:
\begin{enumerate}
\item $C$ is a call price surface
\item There exists a non-negative supermartingale $S$ such that $S_0=1$
and  
$$
C(\kappa, t) = 1 - \EE[ S_t \wedge \kappa] \mbox{ for all } (\kappa, t).
$$
\item There exists a non-negative supermartingale $S^*$ such that $S^*_0=1$
and  
$$
C(\kappa, t) = 1 - \EE[ 1 \wedge (\kappa S^*_t)] \mbox{ for all }  (\kappa, t).
$$
\item For all $\varepsilon > 0$, there exist bounded non-negative martingales $\alpha$ and $\beta$ such that
$\alpha_0=1 =\beta_0$ and  
$$
C(\kappa, t) = \EE[ (\alpha_t - \kappa \beta_t)^+] \mbox{ for all }  (\kappa, t)
$$
and such that  $ \alpha_t + \varepsilon \beta_t = 1 + \varepsilon$  for all $t \ge 0$.  
\item $C(\kappa, \cdot)$ is non-decreasing with $C(\kappa, 0) = (1-\kappa)^+$
for all $\kappa \ge 0$, and  $C(\cdot, t)$ is convex for all $t \ge 0$. 
\end{enumerate} 
\end{theorem}

\begin{proof} The implications ($n$) $\Rightarrow$ (5) for
$1 \le n \le 4$ are easy to check by the conditional version of Jensen's inequality.

The implications  (5) $\Rightarrow$ (2) and (5) $\Rightarrow$ (3) are proven as follows.  By Theorems \ref{th:rep} and \ref{th:dual} there
 exist families of random variables $(S_t)_{t\ge 0}$ and
 $(S^*_t)_{t \ge 0}$ such that
 $$
 C(\kappa, t) = 1 - \EE[ S_t \wedge \kappa] = 1 - \EE[ 1 \wedge (\kappa S_t^*) ]
 $$
 for all $\kappa \ge 0$ and $t \ge 0$.  The assumption that
 $C(\kappa, \cdot)$ is non-decreasing implies that both families
 of random variables (or more precisely, both families of laws) are
 non-decreasing in the decreasing-convex order.  The implications
 then follow from Kellerer' theorem \cite{kellerer}.

Implication (4) $\Rightarrow$ (1) is obvious.  So it remains to show
the implication (5) $\Rightarrow$ (4).   Fix $\varepsilon > 0$ and let
$$
\tilde C(\kappa, t) = \left\{ \begin{array}{ll}
	C\left( \frac{\varepsilon \kappa}{1+\varepsilon-\kappa}, t \right)
	 \left(1 - \frac{\kappa}{1+\varepsilon} \right) & \mbox{ if }  0 \le \kappa < 1+ \varepsilon \\
	0 & \mbox{ if } \kappa \ge  1+ \varepsilon
	\end{array} \right.
$$
It is straightforward to verify that $\tilde C$ satisfies hypothesis (5).
Hence there exists a non-negative supermartingale $\alpha$ such that
$$
\tilde C(\kappa, t) = 1 - \EE[ \alpha_t \wedge \kappa ] \mbox{ for all } (\kappa,t).
$$
But since $\tilde C(\kappa, t) = 0$ for all $\kappa \ge 1+ \varepsilon$
we can conclude that for all $t$ we have 
both $\EE( \alpha_t ) = 1$ and $ \alpha_t \le 1+ \varepsilon
$ a.s.  In particular, $\alpha$ is a true martingale 
so that
$$
\tilde C(\kappa, t) = \EE [  ( \alpha_t - \kappa )^+  ]
$$
or equivalently
$$
C(\kappa, t) = \EE [  ( \alpha_t - \kappa \beta_t)^+  ]
$$
where $\beta = \frac{1}{\varepsilon}(1+ \varepsilon - \alpha)$ as claimed.
\end{proof}

\begin{remark} The implication  (5) $\Rightarrow$ (2) is
well-known, especially in the case where $C(\infty, t) = 0$ for all $t \ge 0$
where the supermartingale $S$ is a martingale.
See, for instance, the paper of Carr \& Madan \cite{CarrMadan}.
However, implication  (5) $\Rightarrow$ (4) seems new.
\end{remark}

\subsection{One-parameter semigroups}\label{se:one-parameter}
 Returning to the topics of Section \ref{se:property}, 
 we   note that the operation $\bullet$ interacts well with the
 natural partial ordering on the space of call price curves:
\begin{proposition}\label{th:increasing}
For any $C_1, C_2 \in \mathcal C$, we have 
$$
\max\{ C_1(\kappa), C_2(\kappa) \} \le  C_1 \bullet C_2(\kappa) \mbox{
for all } \kappa \ge 0.
$$ 
\end{proposition}

\begin{proof}  Let $S_1$ be a primal representation of $C_1$ and 
$S_2^*$ a dual representation of $C_2$.  Suppose $S_1$ and $S_2^*$ are
independent. Then by Theorem \ref{th:bullet-S} we have
\begin{align*}
C_1 \bullet C_2(\kappa) & \ge 1 - \EE[ S_1 \wedge (\kappa S_2^*)] \\
& \ge 1 -   \EE[ \EE(S_1) \wedge (\kappa S_2^*)] \\
& \ge 1 - \EE[ 1 \wedge (\kappa S_2^*)] \\
& = C_2(\kappa)
\end{align*}
by first conditioning on $S_2^*$ and applying the conditional Jensen
inequality, and then using the bound $\EE(S_1) \le 1$.  The other
implication is proven similarly.
\end{proof}

Combining Theorem \ref{th:char-surf}  and Proposition \ref{th:increasing},
brings us to the main observation of this paper: 
 if $(C(\cdot, t) )_{t  \ge 0}$ is a one-parameter
sub-semigroup of $\mathcal C$  then $C(\cdot, \cdot)$ is a call price surface.
Fortunately, we will see that all such sub-semigroups can be explicitly characterised and are reasonably tractable.  

With the motivation of finding tractable family of call price
surfaces, we now study the  family of sub-semigroups of $\mathcal C$ indexed by a single parameter $y \ge 0$.   We change notation from $t$ to $y$,
since the $y$ will correspond to total implied standard deviation in
the Black--Scholes framework, so $y= \sigma \sqrt{t}$.  In particular,
we will think of $y$ not literally as the maturity date of the option, but
rather an increasing function of that date.

We will make use of the following notation.  For a probability density function $f$,  
let
$$
C_f(\kappa,y) = \int_{-\infty}^{\infty} ( f(z+y)-  \kappa f(z))^+ dz =     1 - \int_{-\infty}^{\infty}   f(z+y) \wedge  [ \kappa f(z) ] dz 
$$
for $y \in \RR$ and $\kappa \ge 0$.  Note that
$$
\BS(\cdot, y) = C_{\varphi}(\cdot, y)
$$
for $y \ge 0$, where $\varphi$ is the standard normal density.

It will be useful to distinguish a special class of densities:
\begin{definition} A probability density function $f: \RR \to [0,\infty)$ is log-concave iff
$\log f: \RR \to [-\infty, \infty)$ is concave.   
\end{definition}

We will use repeatedly a useful characterisation of log-concave densities
due to Bobkov \cite[Proposition A.1]{bobkov}:

\begin{proposition}[Bobkov]\label{th:bobkov}
Let $f$ be a probability density, with $f > 0$ on the interval $(L,R)$.  
Let $F(x) = \int_{L}^x f(z) dz$ be the corresponding 
cumulative distribution function, and $F^{-1}: [0,1] \to [L,R]$ its
quantile function.
The following are equivalent:
\begin{enumerate}
\item $f$ is log-concave.
\item $ F( F^{-1}(\cdot) + y)$ is concave on $(0,1)$ for each $y \ge 0$.
\item $ f \circ F^{-1}(\cdot)$ is concave on $(0,1)$.
\end{enumerate} 
\end{proposition}
 
Let  $f$ be a log-concave density
supported on $[L,R]$ where $- \infty \le L < R \le + \infty$.    Recall that log-concavity implies
that $f$ is continuous on the open interval $(L,R)$, but may have discontinuities at the end points.  However, without any loss of generality, we will assume throughout that $f$ is continuous on $[L,R]$.

We now present a family of one-parameter sub-semigroups of $\mathcal C$.
\begin{theorem}\label{th:peacock}
Let $f$ be a log-concave probability density function.  Then  
$$
C_f(\cdot, y_1  ) \bullet C_f( \cdot, y_2 ) = C_f(\cdot, y_1+y_2  ) \mbox{ for all } y_1, y_2  \ge 0.
$$
\end{theorem}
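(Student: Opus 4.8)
The plan is to invoke Theorem~\ref{th:countermonotone}, building the primal and dual representations on the single probability space carrying $Z$ and letting log-concavity supply precisely the countermonotonicity that turns its inequality into an equality. I may assume $y_1, y_2 > 0$, since $S^{(0)} \equiv 1$ gives $C_f(\kappa,0) = 1 - \EE[1 \wedge \kappa] = (1-\kappa)^+$, so that $C_f(\cdot,0) = E$ and the case of a vanishing parameter follows from the identity property in Theorem~\ref{th:semigroup}(1).

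First I would fix representations: take $C_1 = C_f(\cdot,y_1)$ with primal representation $S_1 = S^{(y_1)} = f(Z+y_1)/f(Z)$, and $C_2 = C_f(\cdot,y_2)$ with dual representation $S_2^* = S^{(-y_2)} = f(Z-y_2)/f(Z)$. These are the representations already recorded in the text, and both are functions of the same $Z$, as Theorem~\ref{th:countermonotone} requires. Next I would evaluate the right-hand side of that theorem. Since
\begin{align*}
S_1 \wedge (\kappa S_2^*) = \frac{1}{f(Z)}\big[ f(Z+y_1) \wedge \kappa f(Z-y_2) \big],
\end{align*}
integrating against the density $f$ cancels the factor $1/f(Z)$, and the substitution $w = z - y_2$ gives
\begin{align*}
\EE[ S_1 \wedge (\kappa S_2^*) ] = \int_{-\infty}^{\infty} f(w+y_1+y_2) \wedge \kappa f(w)\, \d w,
\end{align*}
so that $1 - \EE[S_1 \wedge (\kappa S_2^*)] = C_f(\kappa, y_1+y_2)$ straight from the definition of $C_f$. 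This computation is routine.

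The crux --- and the only point at which log-concavity enters --- is to show that $S_1$ and $S_2^*$ are countermonotonic; granting this, Theorem~\ref{th:countermonotone} upgrades to $C_1 \bullet C_2(\kappa) = 1 - \EE[S_1 \wedge (\kappa S_2^*)] = C_f(\kappa, y_1+y_2)$ and the proof is complete. Both are deterministic functions of the common variable $Z$, so it is enough to verify that $z \mapsto f(z+y_1)/f(z)$ is non-increasing while $z \mapsto f(z-y_2)/f(z)$ is non-decreasing, since a non-increasing and a non-decreasing function of a common random variable are countermonotonic. Writing $\phi = \log f$, concave on the support, the first ratio equals $e^{\phi(z+y_1)-\phi(z)}$ and concavity makes the increment $\phi(z+y_1)-\phi(z)$ non-increasing for $y_1 > 0$, while the second equals $e^{-[\phi(z)-\phi(z-y_2)]}$ with $\phi(z)-\phi(z-y_2)$ likewise non-increasing, hence the ratio is non-decreasing. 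I expect the main obstacle to be purely the boundary bookkeeping at the edge of a finite support $[L,R]$, where a shifted density vanishes and the relevant ratio drops to $0$; this remains consistent with the monotonicity, and the possibility that the minimising $\eta$ in Theorem~\ref{th:countermonotone} then sits at $0$ is absorbed into that theorem, so nothing further is needed. The structural heart is the single implication that log-concavity of $f$ forces the two ratios to move in opposite directions.
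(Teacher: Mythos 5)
Your proposal is correct and follows essentially the same route as the paper's first proof of Theorem~\ref{th:peacock}: the same primal/dual representations $f(Z+y_1)/f(Z)$ and $f(Z-y_2)/f(Z)$ built from a common $Z$, the same use of log-concavity to get countermonotonicity, the same appeal to Theorem~\ref{th:countermonotone}, and the same change of variables to identify the resulting integral with $C_f(\kappa, y_1+y_2)$. The extra details you supply (the explicit $\phi=\log f$ increment argument and the separate treatment of $y_i=0$) are fine but not essential, since constant random variables are countermonotonic with anything.
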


Note that Theorems \ref{th:countermonotone} and \ref{th:peacock}  
together says for all $\kappa_1, \kappa_2 > 0$ and $y_1, y_2 > 0$, that
$$
C_f(\kappa_1 \kappa_2, y_1 + y_2) \le C_f(\kappa_1, y_1) + \kappa_1 C_f(\kappa_2, y_2), 
$$
proving Theorem \ref{th:BS}.

While Theorem \ref{th:peacock} is not especially difficult to prove,
we will offer two proofs with each highlighting a different perspective on the
operation $\bullet$. The first is  below and the second is in Section \ref{se:proofs}.

\begin{proof}
Letting $Z$ be a random variable with density $f$, note that $f(Z+y)/f(Z)$ is a primal
representation of $C_f( \cdot, y)$.  Note also that by log-concavity of $f$, when $y \ge 0$ the function 
$z \mapsto f(z+y)/f(z)$ is non-increasing.    Similarly, 
$f(Z-y)/f(Z)$ is a dual
representation of $C_f( \cdot, y)$ and $z \mapsto f(z-y)/f(z)$ is non-decreasing. 
In particular, the random variables $f(Z+y_1)/f(Z)$ and $f(Z-y_2)/f(Z)$
are countermonotonic, and hence by Theorem \ref{th:countermonotone} we 
have
$$
C_f(\cdot, y_1) \bullet C_f( \cdot, y_2)(\kappa)
= 1 - \int_{-\infty}^{\infty} f(z+y_1) \wedge [\kappa f(z-y_2)] dz.
$$
The conclusion follows from changing variables in the integral on the right-hand side.
\end{proof}

The upshot of Theorem \ref{th:peacock} and Proposition
\ref{th:increasing} is that, given  
a log-concave density $f$, the function
 $C_f(\kappa, \cdot)$ is non-decreasing for each $\kappa \ge 0$.
Hence, given an increasing function $\Upsilon$, we can conclude from Theorem \ref{th:char-surf}  that we can
define a call price surface by
$$
(\kappa,t) \mapsto   C_f(\kappa, \Upsilon(t) ).
$$
The above formula is reasonably tractable, and
could be  seen to be in the same spirit as the
SVI parametrisation of the implied volatility surface given by Gatheral \& Jacquier \cite{GatheralJacquier}. 
Note that we can recover the Black--Scholes model by setting the density to 
$f = \varphi$ the standard normal density and the increasing function to
$
\Upsilon(t) = \sigma \sqrt{t}
$
where $\sigma$ is the volatility of the stock.
We provide another worked example in section \ref{se:examples}.

At this point we explain the name of this section.
We recall the definitions of   terms
 popularised by Hirsh, Profeta, Roynette \& Yor \cite{HPRY} and
 Ewald \& Yor \cite{ewald} among others: 
\begin{definition}
A lyrebird is a family $X = (X_t)_{t \ge 0}$ of integrable random variables
such that there exists a submartingale  $Y = (Y_t)_{t \ge 0}$ defined
on some other probability space such that $X_t \sim Y_t$ for all $t \ge 0$.
A peacock $X$ is a family of random variables such that 
both $X$ and $-X$ are lyrebirds; i.e. there exists a martingale
with the same marginal laws as $X$.
\end{definition}
The term peacock is derived from the French acronym PCOC, 
Processus Croissant pour l'Ordre Convexe, and lyrebird is
the name of an Australian bird with peacock-like tail feathers.  

 Combining Proposition   \ref{th:increasing}    and Theorem \ref{th:peacock}    
yields the following tractable family of lyrebirds and peacocks.
\begin{theorem}\label{th:surface}
Let $f$ be a log-concave density, and 
let be a random variable $Z$ have density $f$ and    let
$\Upsilon:[0,\infty) \to [0,\infty)$ be increasing.   Set
$$
S_t = \frac{f(Z+\Upsilon(t))}{f(Z)} \mbox{ for } t \ge 0.
$$
The family of random variables $-S=(-S_t)_{t \ge 0}$ is a lyrebird.
If  the support of $f$ is of the form $(-\infty, R]$, then $S$ is 
a peacock.
\end{theorem}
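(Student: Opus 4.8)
The plan is to read off Theorem \ref{th:surface} as an immediate consequence of the semigroup property (Theorem \ref{th:peacock}), the monotonicity of the operation $\bullet$ (Theorem \ref{th:increasing}), and the dictionary between the pointwise partial order on $\mathcal C_1$ and the convex order. First I would observe that, by the discussion preceding Theorem \ref{th:peacock}, the random variable $S_t = f(Z+Y(t))/f(Z)$ is precisely the primal representation $S^{(Y(t))}$ of $C_f(\cdot, Y(t))$. The hypothesis that the support is $(-\infty, R]$, i.e.\ $L = -\infty$, is what places $C_f(\cdot, Y(t))$ in $\mathcal C_1$ and hence forces $\EE(S_t) = 1$ for every $t$; the degenerate case $Y(t) = 0$ is covered as well, since $C_f(\cdot, 0) = E \in \mathcal C_1$ with $S_t \equiv 1$. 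In particular every $S_t$ is integrable and all share the common mean $1$, which is a prerequisite for any convex-order comparison.

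Next I would fix $s \le t$ and use that $Y$ is increasing to write $Y(t) = Y(s) + \big( Y(t) - Y(s) \big)$ with nonnegative increment. Applying Theorem \ref{th:peacock}, which is where log-concavity of $f$ enters, gives the factorisation $C_f(\cdot, Y(t)) = C_f(\cdot, Y(s)) \bullet C_f(\cdot, Y(t) - Y(s))$, and then Theorem \ref{th:increasing} yields $C_f(\cdot, Y(s)) \le C_f(\cdot, Y(t))$ in the pointwise order on $\mathcal C$. Since both functions belong to $\mathcal C_1$ and have primal representations $S_s$ and $S_t$, the characterisation of the partial order on $\mathcal C_1$ in terms of the convex order converts this inequality directly into the statement that $S_s$ is dominated by $S_t$ in the convex order. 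As $s \le t$ were arbitrary, $(S_t)_{t \ge 0}$ is increasing in the convex order, i.e.\ a peacock in the sense of Definition \ref{th:kellerer}.

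The argument is short precisely because all of the analytic content has already been packaged into the earlier results, and I expect the only delicate point to be the role of the support hypothesis rather than any estimate. The reason the theorem insists on support of the form $(-\infty, R]$ rather than a general interval is exactly the common-mean requirement: a finite lower endpoint $L > -\infty$ would make $\EE(S^{(y)}) = \int_{R \wedge (L+y)}^R f(z)\, dz$ depend genuinely on $y$, so the $S_t$ would have different means and could not be compared in the convex order at all. Thus the main obstacle is not a difficult computation but the structural observation that $L = -\infty$ is exactly what is needed to land in $\mathcal C_1$ and invoke the convex-order dictionary; once that is in place, the semigroup and monotonicity properties do all the remaining work.
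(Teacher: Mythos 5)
Your proposal is correct and follows exactly the route the paper intends: the paper's entire ``proof'' is the sentence that Theorem \ref{th:surface} follows by ``combining Theorems \ref{th:peacock} and \ref{th:increasing}'', and you have supplied precisely the missing details -- the decomposition $Y(t)=Y(s)+(Y(t)-Y(s))$, the semigroup identity, the monotonicity of $\bullet$, and the fact that $L=-\infty$ places every $C_f(\cdot,Y(t))$ in $\mathcal C_1$ so that the pointwise order translates into the convex order. Your closing observation about why the support hypothesis is needed (unequal means would preclude any convex-order comparison) is also exactly the right structural point.
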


Note that the the semigroup $( C_f(\cdot, y) )_{y \ge 0}$ 
does not correspond
to a unique log-concave density.  Indeed,
fix a log-concave density $f$ and set
$$
f^{(\lambda,\mu)}(z) = |\lambda| \ f(\lambda z + \mu)
$$
for $\lambda, \mu \in \RR$, $\lambda \ne 0$.
Note that 
$$
C_{f^{(\lambda,\mu)}}( \kappa, y) = C_f(  \kappa, \lambda y ) \mbox{ for all } \kappa \ge 0,
y \in \RR.
$$
However, we will see below that the semigroup does identify the density $f$ 
 up to arbitrary scaling and centring parameters.  
 
 Also,
 note that  by varying the scale parameter $\lambda$ we can interpolate  between two possibilities. On the one hand, we have for all $\kappa \ge 0$
 and $y \in \RR$ that
 $$
C_{f^{(\lambda,\mu)}}(\kappa, y) \to (1- \kappa)^+  \mbox{ as }  \lambda  \to 0
$$
and on the other hand, when $y \ne 0$ that
 $$
C_{f^{(\lambda,\mu)}}(\kappa, y) \to 1  \mbox{ as } |\lambda| \to \infty
$$
by the dominated convergence theorem.

Recall that the call price curve $E(\kappa) = (1-\kappa)^+$ is the identity element for binary
operation $\bullet$.  Hence the family $C_{\mathrm{triv}}$ defined by
$C_{\mathrm{triv}}(\cdot, y) = E$
for all $y \ge 0$ is another example of a subsemigroup of $\mathcal C$. 

 Similarly, 
the call price curve $Z(\kappa) = 1$ is the absorbing element for $\bullet$.
Hence,    the family $C_{\mathrm{null}}$ defined by
$C_{\mathrm{null}}(\cdot, 0) = E$ and  
$C_{\mathrm{null}}(\cdot, y) = Z$
for all $y > 0$ is yet another example of a subsemigroup of $\mathcal C$. 
 
The following theorem  says that the above examples exhaust the possibilities.

 \begin{theorem}\label{th:classify}
Suppose 
$$
C(\kappa,0) = (1-\kappa)^+ \mbox{ for all } \kappa \ge 0
$$
and 
$$
C(\cdot, y_1) \bullet C(\cdot, y_2) = C(\cdot, y_1+ y_2) \mbox{ for all } y_1, y_2 \ge 0.
$$
Then exactly one of the following holds true:
\begin{enumerate}
\item $C(\kappa, y) = (1-\kappa)^+$ for all $\kappa \ge 0, y > 0$;
\item $C(\kappa,y) = 1$ for all $\kappa \ge 0, y > 0$;
\item $C=C_f$ for a 
 log-concave density $f$.
\end{enumerate}
In case (3) the density $f$ is uniquely defined by the semigroup, 
up to centring and scaling.
\end{theorem}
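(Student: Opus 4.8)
The plan is to pass, via the lift-zonoid isomorphism of Section~\ref{se:iso}, from the operation $\bullet$ to composition of functions, and then to classify the resulting one-parameter flows. Recall that this isomorphism encodes each $C\in\mathcal C$ by the concave nondecreasing boundary $G_C\colon[0,1]\to[0,1]$ of the lift zonoid of a primal representation $S$ (so that $G_C'$ is the decreasing rearrangement of $S$), with $E\leftrightarrow\mathrm{id}$, $Z\leftrightarrow 0$, and $C_f(\cdot,y)\leftrightarrow F(F^{-1}(\cdot)+y)$ for a density $f$ with distribution $F$; under it $\bullet$ becomes composition, so the hypothesis $C(\cdot,y_1)\bullet C(\cdot,y_2)=C(\cdot,y_1+y_2)$ reads $G_{y_1}\circ G_{y_2}=G_{y_1+y_2}$ with $G_0=\mathrm{id}$. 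By Theorem~\ref{th:increasing} the family is increasing in $y$, so $C_{0^+}:=\inf_{s>0}C(\cdot,s)=\lim_{s\downarrow 0}C(\cdot,s)\ge E$ exists pointwise as a decreasing limit and lies in $\mathcal C$.

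The first—and decisive—step is to establish the trichotomy by an idempotent argument. Passing to the limit in $C(\cdot,s)\bullet C(\cdot,s)=C(\cdot,2s)$, using monotonicity of $\bullet$, shows that $C_{0^+}$ is idempotent for $\bullet$, and I would prove as a lemma that the only idempotents of $(\mathcal C,\bullet)$ are $E$ and $Z$; this is where the precise structure of $\bullet$ (through the countermonotone representation of Theorem~\ref{th:countermonotone}) enters, and it is exactly what excludes the spurious intermediate idempotents $p\mapsto\min(p,a)$ that the composition law alone would permit. If $C_{0^+}=Z$, then, $Z$ being the maximal element of $\mathcal C$ and $C_{0^+}=\inf_{s>0}C(\cdot,s)$, every $C(\cdot,s)$ with $s>0$ must equal $Z$: this is case~(2). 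If instead $C_{0^+}=E$, the flow is right-continuous at the origin, and we are left either with $C(\cdot,y)=E$ for all $y$, which is case~(1), or with a genuinely nondegenerate flow.

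In the nondegenerate case I would reconstruct $f$ by rectifying the flow to translations. Fixing an interior reference level $p_*$, the orbit map $y\mapsto G_y(p_*)$ is continuous and strictly increasing, and inverting it defines a strictly increasing $\Psi$ onto an interval $(L,R)$ satisfying $\Psi\circ G_y=\Psi+y$; setting $F:=\Psi^{-1}$ and $f:=F'$ yields $G_y=F(F^{-1}(\cdot)+y)$, so that
\[
G_y'(F(z))=\frac{f(z+y)}{f(z)}.
\]
Concavity of each $G_y$ is then precisely the assertion that $z\mapsto f(z+y)/f(z)$ is nonincreasing for every $y>0$, which is to say that $f$ is log-concave (equivalently, that the generator $\hat H=f\circ F^{-1}$ is concave). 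Unwinding the isomorphism gives $C=C_f$, which is case~(3); the finiteness of the endpoints $L,R$ is dictated by Proposition~\ref{th:closed}, which forces $C(\cdot,y)$ to lie in $\mathcal C_1$ (respectively $\mathcal C_+$) either for all $y>0$ or for none, corresponding to $L=-\infty$ (respectively $R=+\infty$).

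The main obstacle is promoting the bare algebraic semigroup law to a flow regular enough to rectify. Monotonicity in $y$ furnishes the one-sided limits, and the idempotent lemma supplies both the trichotomy and right-continuity at the origin; but one must still show that the orbit maps are continuous and strictly increasing, and that each $G_y$ has no interior fixed point—an interior fixed point would, by concavity, force $G_y$ to coincide with the identity on an initial segment and thereby split the support of $f$. Carrying out the idempotent lemma and this regularity analysis, together with the endpoint bookkeeping via Proposition~\ref{th:closed}, is the real substance of the argument; the reconstruction of $f$ and the verification that $C=C_f$ are then routine.
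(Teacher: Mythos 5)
Your plan follows the paper's own architecture in its main lines --- transport $\bullet$ through the conjugation of Section~\ref{se:iso} into function composition, classify the resulting one-parameter flows by solving the translation equation, and obtain log-concavity of $f$ from Bobkov's result --- but your trichotomy step is genuinely different, and it is correct. The limit $C_{0^+}=\inf_{s>0}C(\cdot,s)$ is indeed idempotent (the infima defining $\bullet$ and the decreasing limit commute), and the only idempotents of $(\mathcal C,\bullet)$ are $E$ and $Z$: by Theorems~\ref{th:duality} and~\ref{th:isomorphism} an idempotent corresponds to $\hat g\in\hat{\mathcal C}$ with $\hat g\circ\hat g=\hat g$, such a $\hat g$ equals the identity on its image $[\hat g(0),1]$, hence $\hat g(p)=\max\bigl(p,\hat g(0)\bigr)$, which is concave only when $\hat g(0)\in\{0,1\}$. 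Note, however, that what kills the intermediate idempotents is concavity together with $\hat g(1)=1$ (Proposition~\ref{th:propsofChat}), not Theorem~\ref{th:countermonotone} as you suggest. Your route cleanly replaces the paper's use of Lemma~\ref{th:estimate} for the null case: if $C_{0^+}=Z$ then, $Z$ being the maximal element and a lower bound for the increasing family, $C(\cdot,y)=Z$ for all $y>0$. This is a real simplification and is worth having.

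The genuine gap is that everything after the trichotomy --- which is where most of the paper's proof actually lives --- is asserted rather than proved, and one needed step is not even identified. First, you defer the regularity analysis: continuity and strict monotonicity of the orbit maps $y\mapsto\hat C(p,y)$ (before they hit $1$), and the fact that $\hat C(p,y)\uparrow 1$ as $y\uparrow\infty$. The paper obtains continuity by applying Dini's theorem to the condition $\inf_{y>0}\hat C(p,y)=p$ (your $C_{0^+}=E$) to get $\sup_p[\hat C(p,\varepsilon)-p]\to0$, and then uses the translation equation; strictness comes from the concavity argument showing that a concave $\hat g\ge\mathrm{id}$ with $\hat g(1)=1$ which is not the identity has no fixed point in $(0,1)$; the limit to $1$ comes from Lemma~\ref{th:estimate}. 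Declaring this ``the real substance'' and omitting it leaves a plan, not a proof. Second, and more seriously, rectifying along the orbit of a single reference level $p_*$ only conjugates the flow to translations on $[p_*,1)$: the flow is a semigroup with no inverses, so the orbit of $p_*$ never descends below $p_*$, and your $\Psi$, $F=\Psi^{-1}$, $f=F'$ exist only there. To produce a distribution function on a full interval $(L,R)$ with $F\to0$ at the left endpoint --- i.e.\ to recover $f$ on its whole support --- the paper runs the rectification inductively along a sequence $p_n\downarrow0$, choosing $y_n$ with $\hat C(p_n,y_n)=p_{n-1}$, and glues the resulting conjugations $F_n$ after checking their consistency. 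This extension step is indispensable (it is where $L$ comes from), it is absent from your proposal, and Proposition~\ref{th:closed}, which you invoke only for endpoint bookkeeping, does not supply it.
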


The proof appears in Section \ref{se:proofs}.

\begin{remark}
One could certainly consider other binary operations  on the space $\mathcal C$
which are also compatible with the partial order.  
For instance, we could let
\begin{align*} 
C_1 \clubsuit C_2(\kappa) = 1 - \EE[S_1 \wedge (\kappa S_2^*)  ] 
\end{align*}
where the primal representation $S_1$ of $C_1$
 is   \textit{independent} of the dual representation $S_2^*$ of $C_2$. 
 Note that this binary operation $\clubsuit$ is commutative, 
 and indeed we have
 $$
C_1 \clubsuit C_2(\kappa) = 1 - \EE[ (S_1 S_2) \wedge \kappa]
$$
where $S_2$ is a primal representation of $C_2$, again independent 
of $S_1$.   In fact, the binary operation $\clubsuit$ 
can be expressed (somewhat awkwardly) in terms of the call price curves $C_1$ and $C_2$:
\begin{align*} 
C_1 \clubsuit C_2(\kappa) = 1 + \int_0^{\infty} \frac{\kappa}{\eta}
C_1'(\eta) C'_2(\kappa/\eta) d\eta - \int_0^{\infty}
 \int_0^{\infty} C_1'(\eta_1) C_2'(\eta_2) \one_{\{ \eta_1 \eta_2 \le \kappa\}} 
 d\eta_1 \ d\eta_2.
 \end{align*}

As described above, one could construct call price surfaces by
studying one parameter semigroups   
for this binary operation $\clubsuit$.  
Indeed, such semigroups are easy to describe since their primal
representations are essentially exponential L\'evy processes.   Unfortunately,
the call prices given by an exponential
 L\'evy process are  not easy to 
write down in general.   However, we have seen that the 
one-parameter semigroup of call prices for the  binary operation $\bullet$
are extremely simple to write down. 
 It is the simplicity of these formulae that is the claim to practicality
 of the results presented here. 
\end{remark}

\section{Calibrating the surface}\label{se:calibrating}
\subsection{An exploration of $C_f$}\label{se:explore}
We have argued that if $f$ is a log-concave density and $\Upsilon$ is an increasing function, then the family
$\{ C_f(\kappa, \Upsilon(t)): \kappa \ge 0, t \ge 0 \}$  is a call
price surface as defined in Section \ref{se:surface}, where
the notation $C_f$ is defined in Section \ref{se:one-parameter}. 
The motivation of this section is to bring this observation
from theory to practice. In particular, to calibrate the functions $f$ and $\Upsilon$
to market data, it is useful to have at hand some properties,
including asymptotic properties, of the function $C_f$.  

In what follows we will assume that the density $f$ has support 
of the form $[L, R]$ for some constants 
$-\infty \le L < R \le + \infty$.  Recall that we assume $f$ is continuous
on $[L,R]$.
Now let $Z$ be random variable with density $f$. 
For each $y\in \RR$, define a non-negative random variable by
$$
S^{(y)} =  \frac{f(Z+y)}{f(Z)}.
$$
Note that $S^{(y)}$ is well-defined since $L < Z < R$ almost 
surely, and hence $f(Z) > 0$ almost surely.
In this notation, we have for all $y \in \RR$ that
$$
C_f(\kappa,y) = 1 - \EE[ S^{(y)} \wedge \kappa  ]
$$
so that by Theorem \ref{th:rep} we have $C_f(  \cdot, y) 
\in \mathcal C$  and that $S^{(y)}$ is a primal representation 
of $C_f(\cdot, y)$.

Note also that $S^{(y)} = 0$ almost surely for $|y| \ge R-L$
while for $|y| < R-L$ we have
$$
\PP( S^{(y)} > 0 ) =  \PP(L + y^- <Z < R - y^+)
$$ 
and
$$
\EE( S^{(y)} ) = \PP( L + y^+ < Z < R - y^-).
$$
In particular, for $y > 0$ we have
$$
C_f( \cdot, y) \in \mathcal C_+ \mbox{ if } R = + \infty
$$
and 
$$
C_f( \cdot, y) \in \mathcal C_1 \mbox{ if } L = - \infty,
$$
where the sets $\mathcal C_+$ and $\mathcal C_1$ are defined in Section \ref{se:binary}.

By changing variables, we find that 
a dual representation of $C_f(\cdot, y)$ is given by
$$
S^{(y)*} = \frac{f(Z-y)}{f(Z)} = S^{(-y)}
$$
 and therefore 
$$
C_f(\cdot,y)^* = C_f(\cdot,-y).
$$
It is interesting to observe that the call price surface 
$C_f$ satisfies the put-call symmetry formula $C_f(\cdot,y)^*  = 
C_f(\cdot,y)$
if the density $f$ is an even function.

 By implication (2) of Proposition~\ref{th:bobkov} 
we have that for $y \ge 0$ that the map $z \mapsto \frac{f(z
+y)}{f(z)}$ is non-increasing.  Let
$$
d_f(  \kappa,y) = \sup\left\{ z  > L:   \frac{f(z+y)}{f(z)} \ge 
\kappa \right\}  \mbox{ for } \kappa \ge 0, y \ge 0
$$  
with the convention that $\sup \emptyset = L$.
  Note 
 that$\frac{f(z+y)}{f(z)} \ge \kappa$
if and only if  $d_f(y, \kappa) \ge z$. 
With this notation, we have 
$$
C_f(\kappa, y) = F( d_f( \kappa,  y) + y) - \kappa F( d_f
(\kappa, y) ) \mbox{ for all } \kappa \ge  0, y \ge  0,
$$
where  $F(x) = \int_{L}^x f(z) dz$ is the cumulative 
distribution corresponding to $f$.

\begin{remark} The standard normal density $\varphi$ is log-
concave
and  we have the computation
$$
d_\varphi(\kappa,y) = - \frac{ \log \kappa }{y} - \frac{y}{2}
$$
yielding the usual Black--Scholes formula. 
\end{remark}

The first result may seem like a curiosity, but in fact is
a useful alternative formula for computing $C_f$ numerically, given
the density $f$.  In particular, the following formula does
not require the evaluation of the function $d_f$ defined above.
This is a generalisation of Theorem 3.1 of \cite{SIFIN}.
The proof is essentially the same, but included here for completeness.
We will use the notation
$$
\hat C_f(p,y) = F( F^{-1}(p) + y)
$$

\begin{theorem}\label{th:first-duality} For $\kappa, y \ge 0$, we have
$$
C_f(\kappa, y ) = \sup_{0 \le p \le 1} [\hat C_f(p, y) - p \kappa ]
$$
\end{theorem}

\begin{proof}
Fix $\kappa, p, y$ and let $z=F^{-1}(p)$. Note that
\begin{align*}
F(z+y) - \kappa F(z) &= \int_{-\infty}^{z} (f(u+y)-\kappa f(u)) du \\
&\le \int_{-\infty}^{z} (f(u+y)-\kappa f(u))^+ du \\
&\le  \int_{-\infty}^{\infty} (f(u+y)-\kappa f(u))^+ du \\
& = C_f(\kappa, y).
\end{align*}
Given $\kappa$, there is equality when $z= d_f(\kappa,y)$.
\end{proof}

The next result gives an asymptotic expression for call prices
at short maturities and close to the money.
In what follows, we will use the notation
\begin{align*}
H_f(x) &=   f(L) +  \int_{L }^{R } (f'(z) - f(z) x)^+ dz \\
& = f(R)-  \int_{L }^{R } f'(z) \wedge [ f(z) x ] dz.
\end{align*}
where $f'$ is the right-derivative of $f$.  Recall that $f'$ always
exists on the interval $(L,R)$.

\begin{theorem}\label{th:generator}  
As $\varepsilon \downarrow 0$ we have that
$$
\frac{1}{\varepsilon} C_f(e^{\varepsilon x}, \varepsilon) \to
H_f(x).
$$
\end{theorem} 
 
 \begin{proof}
 Let $a$ be a maximum of $f$ so that $f(z+\varepsilon) \ge f(z)$ 
for $  z \le a-\varepsilon$ and
$f(z+\varepsilon) \le f(z)$ for $  z \ge R-\varepsilon$.  
 We only consider the case $L < a < R$,
as the cases $a=L$ and $a = R$ are similar.

Fix $x$ and $a-L < \varepsilon < R-a$,  and write
$$
  \frac{1}{\varepsilon}C_f(e^{\varepsilon x}, \varepsilon) 
= I_1 + I_2 + I_3
$$
where

\begin{align*}
I_1 &= \frac{1}{\varepsilon}\int_a^{R-\varepsilon} (f(z+
\varepsilon) - e^{\varepsilon x} f(z) )^+ dz  \\
I_2 &= \frac{1}{\varepsilon}\int_{a-\varepsilon}^a (f(z+
\varepsilon) - e^{\varepsilon x} f(z) )^+ dz \\
I_3 &= \frac{1}{\varepsilon}\int_{L-\varepsilon}^{a-\varepsilon} 
(f(z+\varepsilon) - e^{\varepsilon x} f(z) )^+ dz.
\end{align*}
Note that for  $a \le z \le R - \varepsilon$ we have
$$
\frac{1}{\varepsilon}(f(z+\varepsilon) - e^{\varepsilon x} f(z) 
)^+
\le x^- \ f(z) 
$$
so 
$$
I_1 \to \int_a^R (f'(z) - x f(z))^+ dz
$$
by the dominated convergence theorem.  

For the second term, note that by the 
continuity of $f$ at the point $a$ we have
$$
\sup_{a-\varepsilon \le z \le a} |f(z+\varepsilon) - e^
{\varepsilon x} f(z) |
\to 0
$$
as $\varepsilon \downarrow 0$. In particular, we have
$
I_2 \to 0.
$

Finally, for the third term apply the put-call parity formula
to get
\begin{align*}
 I_3 &= \frac{1}{\varepsilon} \int_{L-\varepsilon}^{a-
\varepsilon} (f(z+\varepsilon) - e^{\varepsilon x} f(z) )  dz 
+  \frac{1}{\varepsilon} \int_{L-\varepsilon}^{a-\varepsilon}
(e^{\varepsilon x} f(z) - f(z+\varepsilon))^+   dz  \\
&\to f(a)  - x F(a) + \int_L^a ( x f(z) - f'(z) )^+ dz,
\end{align*}
again by the dominated convergence theorem.  The conclusion follows
from another application of put-call parity and recombining the integrals.
 \end{proof}

There are two interesting consequences of Theorem \ref{th:generator} above.
The first is that the density $f$ can be recovered from short time
asymptotics.  We will use the notation
$$
\hat H_f(p) = f \circ F^{-1}(p)
$$
for $0 \le p \le 1$.  The proof follows the same pattern
as that of Theorem \ref{th:first-duality}, so is omitted.

\begin{theorem}\label{th:dualofH}
For all $0 \le p \le 1$  we have
\begin{align*}
\hat H_f(p) =  \inf_{x \in \RR} [ H_f(x) + xp ].
\end{align*}
\end{theorem}

\begin{remark}  Given the function $\hat H_f$, we can recover $f$,
up to centring, as follows:
Fix $0 < p_0 < 1$ and set $F(0) = p_0$.   Then we have
$$
 F^{-1}(p)  = \int_{p_0}^p \frac{dq}{\hat H_f(q)}.
$$ 
\end{remark}

 We note 
in passing that the call price function $C_f$  satisfies  a
non-linear partial differential equation featuring the
function $\hat H_f$  when $f$ is suitably well-behaved enough:
\begin{proposition}\label{th:PDE}
Let $f$ be a strictly log-concave density supported on all of $
\RR$.  Suppose that $f$ 
is $C^1$ and
such that 
$$
\lim_{z \downarrow - \infty} \frac{f'(z)}{f(z)} = + \infty
\mbox{ and }  \lim_{z \uparrow + \infty} \frac{f'(z)}{f(z)} = - 
\infty.
$$
Then
\begin{align*}
\frac{\partial C_f}{\partial y}  = \kappa \ \hat H \left( - 
\frac{\partial C_f}{\partial \kappa} \right)   =  \hat H \left( 
C_f - \kappa \frac{\partial C_f}{\partial \kappa} \right)
\end{align*}
on $( \kappa, y) \in (0,\infty) \times (0, \infty)$.
\end{proposition}

\begin{proof} 
By log-concavity, we have for all $z \in \RR$ and $y > 0$ that
$$
\frac{f'(z+y)}{f(z+y)} \le \frac{1}{y} \log \frac{f(z+y)}{f(z)}
\le  \frac{f'(z)}{f(z)}
$$
and hence 
$$
\lim_{z \downarrow -\infty} \frac{f(z+y)}{f(z)} = +\infty \mbox{ 
and } 
\lim_{z \uparrow +\infty} \frac{f(z+y)}{f(z)} = 0.
$$
Also, by the strict log-concavity of $f$ the map 
$z \mapsto \frac{f(z+y)}{f(z)}$ is strictly decreasing.  This 
shows that
  $d_f(\kappa, y)$ is finite for all $\kappa  > 0$ and
that 
$$
f(d_f(\kappa, y) +y )= \kappa f( d_f(\kappa, y)).
$$
By the differentiability of $f$ and the implicit function 
theorem,
the function $d_f$ is differentiable on $(\kappa, y) \in (0,
\infty) \times (0, \infty)$.

One checks that
$$
\frac{\partial C_f}{\partial y} = f( d_f(\kappa, y) + y ).
$$
and 
$$
 \frac{\partial C_f}{\partial \kappa} =  - F( d_f(\kappa,y) ).
$$
The conclusion follows. 
\end{proof}

We now comment on a second interesting consequence of Theorem \ref{th:generator}. 
Note that for the limit for the Black--Scholes call function is  
$$
\frac{1}{\varepsilon} C_{\mathrm{BS}}(e^{\varepsilon x}, 
\varepsilon)
\to H_{\varphi}(x) = \varphi(x) - x \Phi(-x).
$$
The function $H_{\varphi}$ has an interesting financial 
interpretation.
Recall that in the Bachelier model, assuming zero interest 
rates,
the initial price of a call option of maturity $T$ and strike 
$K$ is given by
$$
C_{0,T, K} = \sigma \sqrt{T} H_{\varphi} \left( \frac{ K - S_0}
{\sigma \sqrt{T}} \right)
$$
where $S_0$ is the initial price of the stock, and $\sigma$ is 
its
arithmetic volatility.  Hence $H_{\varphi}$   can be
interpreted as a normalised call price function in the Bachelier 
model. 

Following the motivation of this section, we are interested not only
in the call price surface itself, but also in the corresponding implied
volatility surface.  Recall that the function $\IBS$ is defined by
$$
y = \IBS(\kappa, c) \Leftrightarrow \BS(\kappa,y) = c.
$$
We will use the notation
$$
Y_f(\kappa, y) = \IBS( \kappa, C_f(\kappa, y) ).
$$
Recall that if the normalised price of a call of 
moneyness $\kappa$ and maturity $t$ is given by
$C_f(\kappa, \Upsilon(t) )$, then the option's implied volatility is given by
$\frac{1}{\sqrt{t}} Y_f(\kappa, \Upsilon(t))$.

\begin{remark}  A word of warning:  We have noted that 
the function $\BS$ 
is the restriction of the function $C_{\varphi}$
to $[0,\infty) \times [0,\infty)$.  However,  it is \textit{not} the case
 that the function  $\IBS$ is a restriction of the function 
 $Y_{\varphi}$.  Indeed, the second argument of $\IBS$
 is a number $c$ in $[0,1]$ while the second argument of $Y_{\varphi}$
 is a number $y$ in $[0,\infty)$.
\end{remark}

With this set-up, we now present a result that gives the 
asymptotics of the implied volatility surface in the short maturity, close to the money limit. 
  
   \begin{theorem}
 We have as $\epsilon \downarrow 0$ that
$$
\frac{1}{\varepsilon} Y_f(1, \varepsilon) \to
 \sqrt{2\pi} \max_z f(z),
$$
and more generally, that 
$$
\frac{1}{\varepsilon} Y_f(e^{\varepsilon x}, \varepsilon) \to
\Lambda_{\mathrm{Ba}}(x, H_f(x) )
$$
where $\Lambda_{\mathrm{Ba}}(x, c )$ is defined by
$$
\Lambda_{\mathrm{Ba}}(x, c ) = \lambda \Leftrightarrow
\lambda H_{\varphi}(x/\lambda) = c.
$$ 
   \end{theorem}

\begin{proof}
Fix $x \in \RR$ and $\delta > 0$.   Let $\lambda = \Lambda_{\mathrm
{Ba}}(x, 
H_f(x) +   \delta )$.
By Theorem \ref{th:generator}, there exists $\varepsilon_0 > 0$ 
such
that
$$
\frac{1}{\varepsilon} C_f(e^{\varepsilon x}, \varepsilon)  \le 
H_f(x)
+ \tfrac{1}{2}\delta,
$$
while
$$
\frac{1}{\varepsilon} C_{\varphi}(e^{\varepsilon x}, \lambda 
\varepsilon)  
\ge  \lambda H_{\varphi}(x/\lambda) - \tfrac{1}{2} \delta
$$
for all $0 < \varepsilon < \varepsilon_0$.   
Hence $C_f(e^{\varepsilon x}, \varepsilon) \le \BS(e^
{\varepsilon x}, \lambda \varepsilon)$ and hence
$$
\limsup_{\varepsilon \downarrow 0} \frac{1}{\varepsilon}
Y_f(e^{\varepsilon x}, \varepsilon) \le 
\liminf_{\delta \downarrow 0} \Lambda_{\mathrm{Ba}}(x, H_f(x) + \delta 
).
$$
A lower bound is established similarly.  The conclusion follows 
form the continuity of $\Lambda_{\mathrm{Ba}}$.
 \end{proof}

For the final theorem of this section, we fix the maturity date and 
now compute extreme strike
asymptotics of the implied volatility.  In what follows, 
we will say that an eventually positive 
function $g$  varies regularly at infinity  with exponent $
\alpha$ iff
$$
\frac{g(\lambda x)}{g(x)} \to \lambda^\alpha \mbox{ as } x \to   
\infty
\mbox{ for all } \lambda > 0.
$$  Regular variation at zero is defined similarly.
   
\begin{theorem}\label{th:wings}
Suppose that $f$ is a log-concave density such that
$-\log \circ f \circ \log$ varies regularly at infinity with 
exponent $a > 0$
and varies regularly at zero with exponent $ - b < 0$.
Then for $y >0 $ we have
$$
\limsup_{\kappa \uparrow \infty} \frac{ Y_f(\kappa,y)}{\sqrt
{\log \kappa}}
 = \sqrt{ 2 \tanh \left( \frac{by}{4} \right)}
 $$
 and
 $$
\limsup_{\kappa \downarrow 0} \frac{ Y_f(\kappa,y)}{\sqrt{-\log 
\kappa}}
 = \sqrt{ 2 \tanh \left( \frac{ay}{4} \right)}
 $$
\end{theorem}

\begin{proof}
The key observation is that $f^\theta$ is Lebesgue integrable if
and only if 
$\theta > 0$.   	Indeed,
since $f$ is integrable and log-concave, there exist constants 
$A, B$ with $B>0$
such that $f(z) \le e^{A - B|z|}$, and hence $f^\theta$ is bounded
from above by an integrable function for $\theta > 0$ and bounded
from below by a  non-integrable function for $\theta \le 0$.

Fix $y > 0$. The moment generating function of $\log S^{(y)}$ is
calculated as 
\begin{align*}
M(p) &= \EE[ (S^{(y)})^p ] \\
& =  I_1 + I_2
\end{align*}
where 
$$
I_1 =  \int_{-\infty}^0 f(z+y)^p f(z)^{1-p} dz
 \mbox{ and }
I_2 = \int_0^{\infty}  f(z+y)^p f(z)^{1-p} dz
$$

By assumption
$$
\frac{ \log f(z+y) }{\log f(z)} \to e^{-by} \mbox{ as } z \to - 
\infty.
$$
or equivalently, $f(z+y) = f(z)^{e^{-by} + \delta(z)}$ where
$\delta(z) \to 0$ as $z \to -\infty$, yielding the expression
$$
 f(z+y)^p f(z)^{1-p} =  f(z)^{ 1 - p(1 - e^{-by} - \delta(z) )}.
$$
The exponent of $f$ on the right-hand side 
is eventually positive, implying $I_1$ is finite, if
$$
p < \frac{1}{1- e^{-by}},
$$ 
 and the exponent is eventually negative, implying $I_1$ is 
infinite, if
$$
p > \frac{1}{1- e^{-by}}.
$$

On the other hand,  $f(z+y) = f(z)^{e^{ay} + \varepsilon(z)}$ 
where
$\varepsilon(z) \to 0$ as $z \to \infty$.  
Writing 
$$
 f(z+y)^p f(z)^{1-p} =  f(z)^{ 1 + p(e^{ay} -1 + \varepsilon(z)) 
 }
$$
we see that for any $p \ge 0$, the exponent of $f$ on right-hand
side is eventually positive, implying $I_2$ is finite.

Therefore, we have shown that
$$
p^* = \sup\{ p \ge 1: M(p) < \infty\} = \frac{1}{1- e^{-by}}.
$$
By Lee's moment formula \cite{lee}, we have
$$
\limsup_{\kappa \uparrow \infty} \frac{ Y_f(\kappa,y)}{\sqrt
{\log \kappa}}
= \sqrt{2} \left(\sqrt{p^*} - \sqrt{p^*-1} \right)
$$
from which the first conclusion follows. The calculation of the 
left-hand wing is similar.
\end{proof}

 \subsection{A parametric example}\label{se:examples}
In this section we consider a parametrised family of 
log-concave densities in which several interesting calculations
can be performed explicitly.  We then try to fit this
family to real call price data as a proof-of-concept.

Consider family of densities of the form
$$
f  (x) = \frac{1}{Z  } \left\{
\begin{array}{ll} 
 e^{r(c+a)x -  r e^{ax} } &  \mbox{ if } x \ge 0 \\   
   e^{r(c-b)x -  r e^{-bx} } &  \mbox{ if } x < 0 
\end{array} \right.
$$		
for parameters $a,b,r > 0$ and real $c$, with normalising constant 
$$
Z   =  \frac{1}{a}r^{-r(1 + c/a)} \bar\Gamma(r; r(1 + c/a))  + 
 \frac{1}{b }r^{-r(1 - c/b)}\bar\Gamma(r; r(1 - c/b))   
$$
where $\bar \Gamma(x, \theta) = \int_x^{\infty} z^{\theta-1} e^{-z}dz$
is the complementary incomplete gamma function.
It is straightforward to check that $f$ is a log-concave 
probability density.

Letting $a=b=r^{-1/2}$ and $c=0$, and then sending $r \to \infty$
recovers the Black--Scholes model $f  \to \varphi$. Roughly speaking, 
$a$ controls the left wing, $b$ the right wing, $c$ the at-the-money skew, 
$r$ the at-the-money convexity.  Although there 
are four parameters, recall 
from Section \ref{se:one-parameter} that 
we have
$$
C_{f_{(\lambda a, \lambda b, \lambda c,r)}}(\kappa,  y ) = 
C_{f_{(a,b,c,r)}}(\kappa, \lambda y ) 
$$
for $\kappa \ge 0$, $y \ge 0$ and $\lambda > 0$.  Hence, there is no 
loss of generality if we insist, for instance, that $abr = 1$, leaving us with only three free parameters.

The distribution function is given explicitly by
$$
F(x)  = \left\{
\begin{array}{ll}  1 - \frac{1}{Za}   r^{-r(1 + c/a)}\bar\Gamma(re^{ax}; r(1 + c/a)) 
 & \mbox{ if } x \ge 0  \\
\frac{1}{Zb }r^{-r(1 - c/b)}\bar\Gamma(re^{-bx}; r(1 - c/b))   
&  \mbox{ if } x < 0 .
\end{array} \right.
$$
The function $d_f$ can be calculated explicitly
 when the absolute log-moneyness $|\log \kappa|$ is sufficiently large:
$$
d_f(\kappa, y) = \left\{
\begin{array}{ll}  \frac{1}{a} \log \left( \frac{  (c+a)y - \frac{1}{r}\log \kappa}{ e^{ay} -1  } \right) 
& \mbox{ for } \kappa \le e^{r(c+a)y - r(e^{ay} -1 )} \\
-\frac{1}{b} \log \left( \frac{ - (c-b)y +  \frac{1}{r}\log \kappa}{ 1 -e^{-by}  } \right) 
& \mbox{ for } \kappa \ge e^{ r(c-b)y + r(e^{by} -1 )}.
\end{array} \right.
$$
Otherwise  $d_f(\kappa, y)$ is the unique root $-y < d < 0$ of the equation
$$
(c+a) y + (a+b) d = e^{a d + ay} - e^{-b d} + \tfrac{1}{r} \log \kappa,
$$
which can be calculated numerically, for instance,
by the bisection method.

The call price curve can be calculated 
by the formula
$$
C_f(\kappa, y) =  F( d_f(\kappa, y) + y ) - \kappa F(d_f(\kappa, y) ).
$$
Note that this formula is  rather explicit when
the absolute log-moneyness  is sufficiently large, and
furthermore, it is numerically tractable in
all cases.

This choice of $f$ has the advantage that call prices can be calculated
very quickly.   Also, for other vanilla options, 
 numerical integration is very efficient since the density function $f$ is
 smooth and decays quickly at infinity.  Alternatively,
 rejection sampling is available, since the density is bounded, for instance,
 by a Gaussian density. 
 
When it comes to calibrate the model, we must find parameters
$a,b,c,r$ and an increasing function $\Upsilon$ such that
$$
C_{f_{(a,b,c,r)}}(\kappa,  \Upsilon(t) )  \approx C^{\mathrm{obs}}(\kappa, t) \mbox{ for all } (  \kappa, t) \in \mathcal S
$$
where $C^{\mathrm{obs}}(\kappa, t)$ is the observed normalised 
price of a call option of moneyness $\kappa$ and maturity $t$,
where $\mathcal S$ is the set of pairs $(\kappa, t)$ for which
there is available market data.
Equivalently, we fit the parameters  
$a,b,c,r$ and the function $\Upsilon$  to try to approximate the
observed implied volatility surface.  

For this exercise, I downloaded E-mini S\&P MidCap 400 call options 
call and put option price data from \texttt{ftp://ftp.cmegroup.com/pub/settle/stleqt} on 12 July 2018, for 
maturities $t_1=0.2, t_2= 0.4, t_3=0.7$ years for all available strikes.
 Letting
$\mathcal S_i = \{ \kappa: (\kappa, t_i) \in \mathcal S \}$ be the set of available strikes for maturity $t_i$, we have
$|\mathcal S_1| = 251,  |\mathcal S_2| = 248$ and $|\mathcal S_3| = 232$
observations. There are six parameters to 
 find: $a, c,r $ and $\Upsilon(t_1) = y_1$, $\Upsilon(t_2) = y_2, \Upsilon(t_3) = y_3$ to fit  $251+248+232=731$ observations.  

To speed up the calibration, we can use the asymptotic 
implied total standard deviation
calculations of Section \ref{se:explore}.  In particular,
we can apply  Theorem \ref{th:wings}  by 
noting that 
$$
-\log \circ f \circ \log(x) = \log Z + \left\{
\begin{array}{ll} r x^{a} - r(a+c) \log x  & \mbox{ if } x \ge 1 \\
r x^{-b} + r(b-c) \log x  & \mbox{ if } x < 1.
\end{array}  \right.
$$
However, we can do better and replace each limsup with
a proper limit by applying standard asymptotic properties of the complementary
incomplete gamma function and the tail-wing formula of Benaim--Friz 
\cite{BenaimFriz} and   Gulisashvili \cite{gulisashvili} 
to find  
$$
\frac{Y_f(\kappa, y) }{\sqrt{\log \kappa}}
\to   \sqrt{2  \tanh\left( \frac{ b y}{4} \right)} \mbox{ as } \kappa \to \infty
$$
and
$$
\frac{Y_f(\kappa, y) }{\sqrt{-\log \kappa}}
\to\sqrt{2  \tanh\left( \frac{  a y}{4} \right)}\mbox{ as } \kappa \to 0.
$$

 Figure  \ref{fi:calibrated} shows  a calibration
 of this family of call prices to real market data.  
It is 
important to stress that there is no a priori reason why this
data should resemble the call surfaces generated by this family of models.
Nevertheless,  although the fit is 
 not perfect, it does seem to indicate that this modelling approach
 is worth pursuing further.

\begin{figure} 
\caption{Implied volatility vs. log-moneyness for market data versus fitted density (red) with $a=3.63, b =0.0545, c = -0.0665, r=  6.89$ and 
$y_1 = 0.234, y_2 = 0.356, y_3 = 0.439$.
}
	  \label{fi:calibrated}
		\includegraphics[trim = 0cm 0 0cm 0, clip, scale = 0.17]{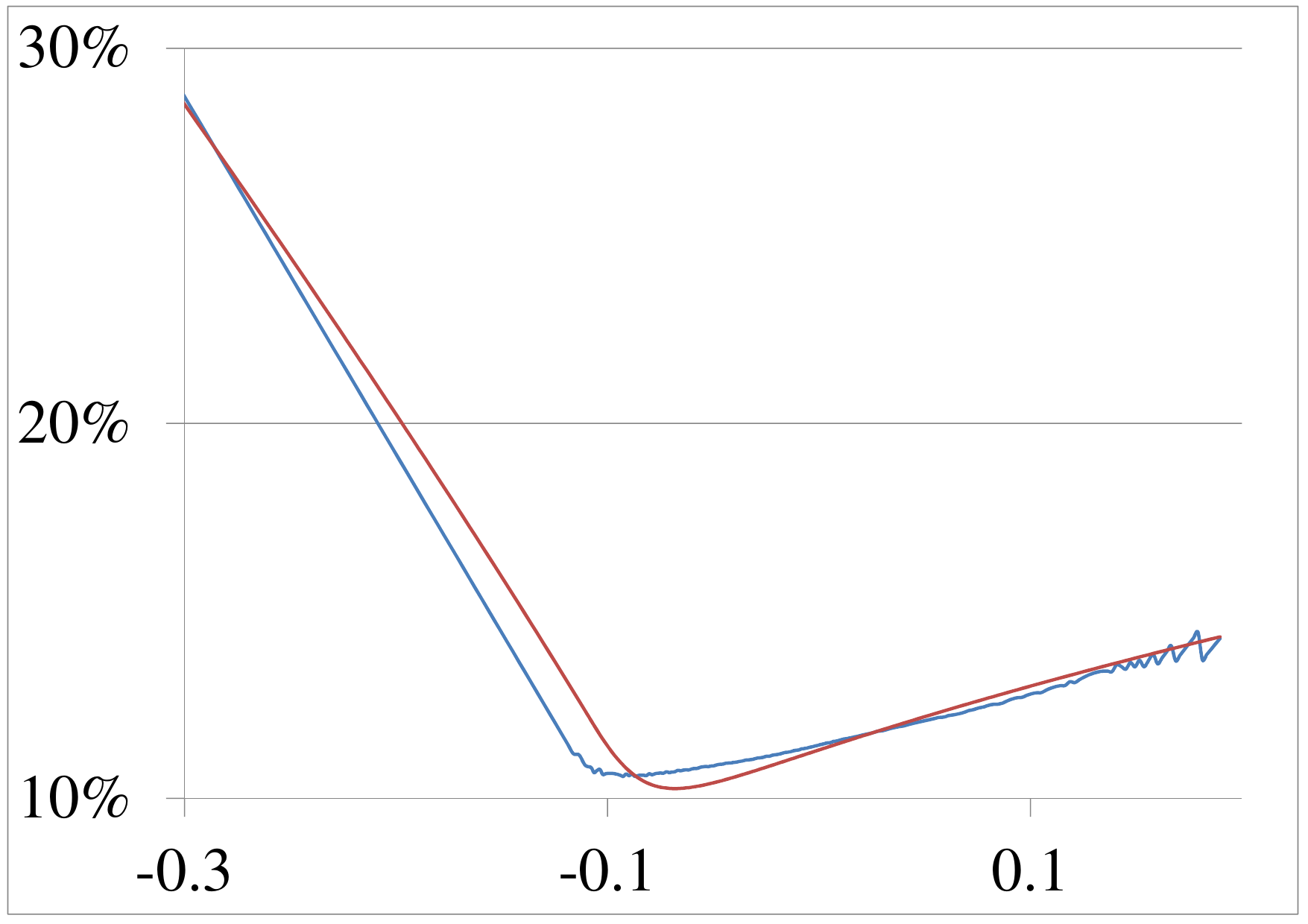}	
		\includegraphics[trim = 0cm 0 0cm 0, clip, scale = 0.17]{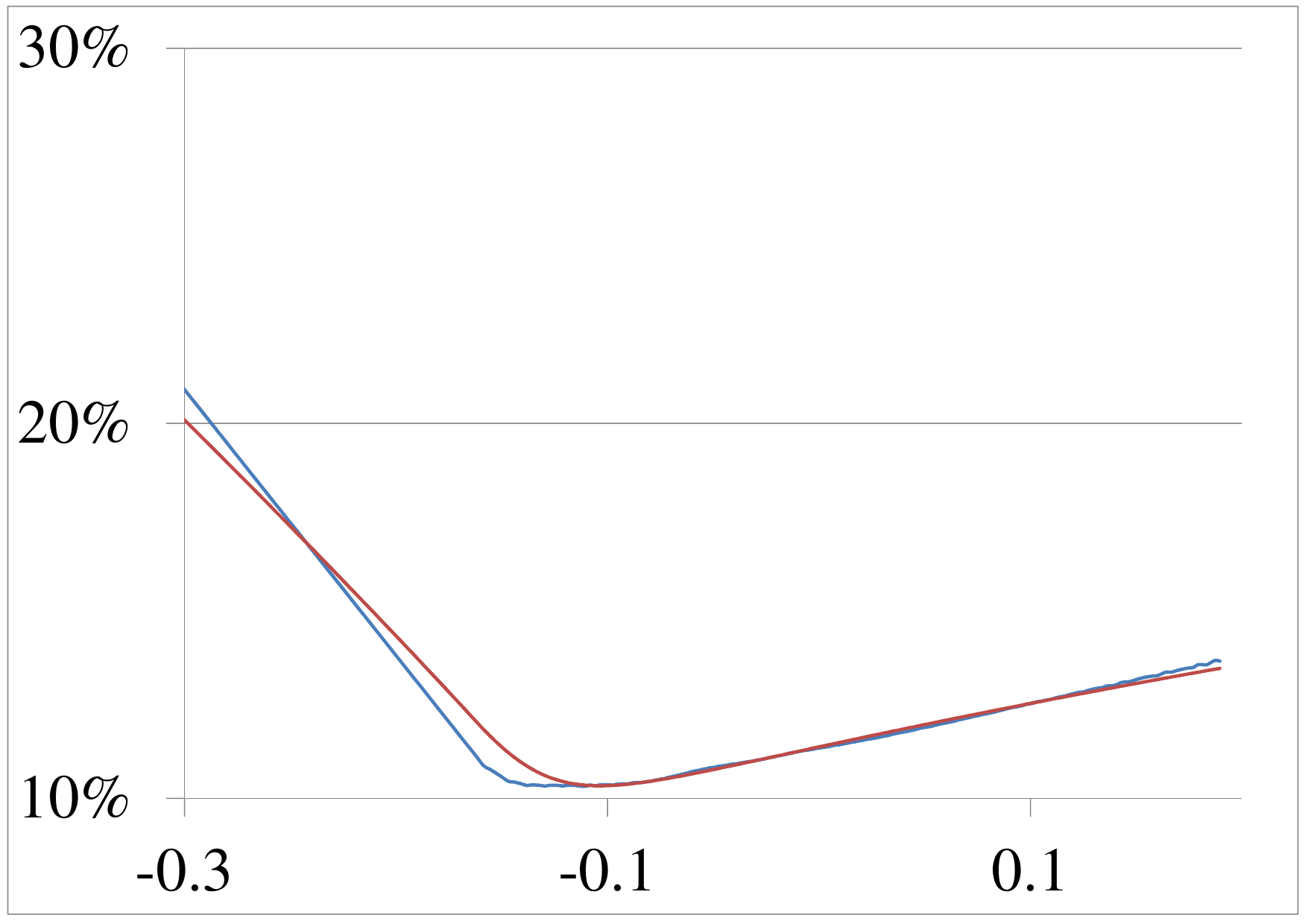}	
		\includegraphics[trim = 0cm 0 0cm 0, clip, scale = 0.17]{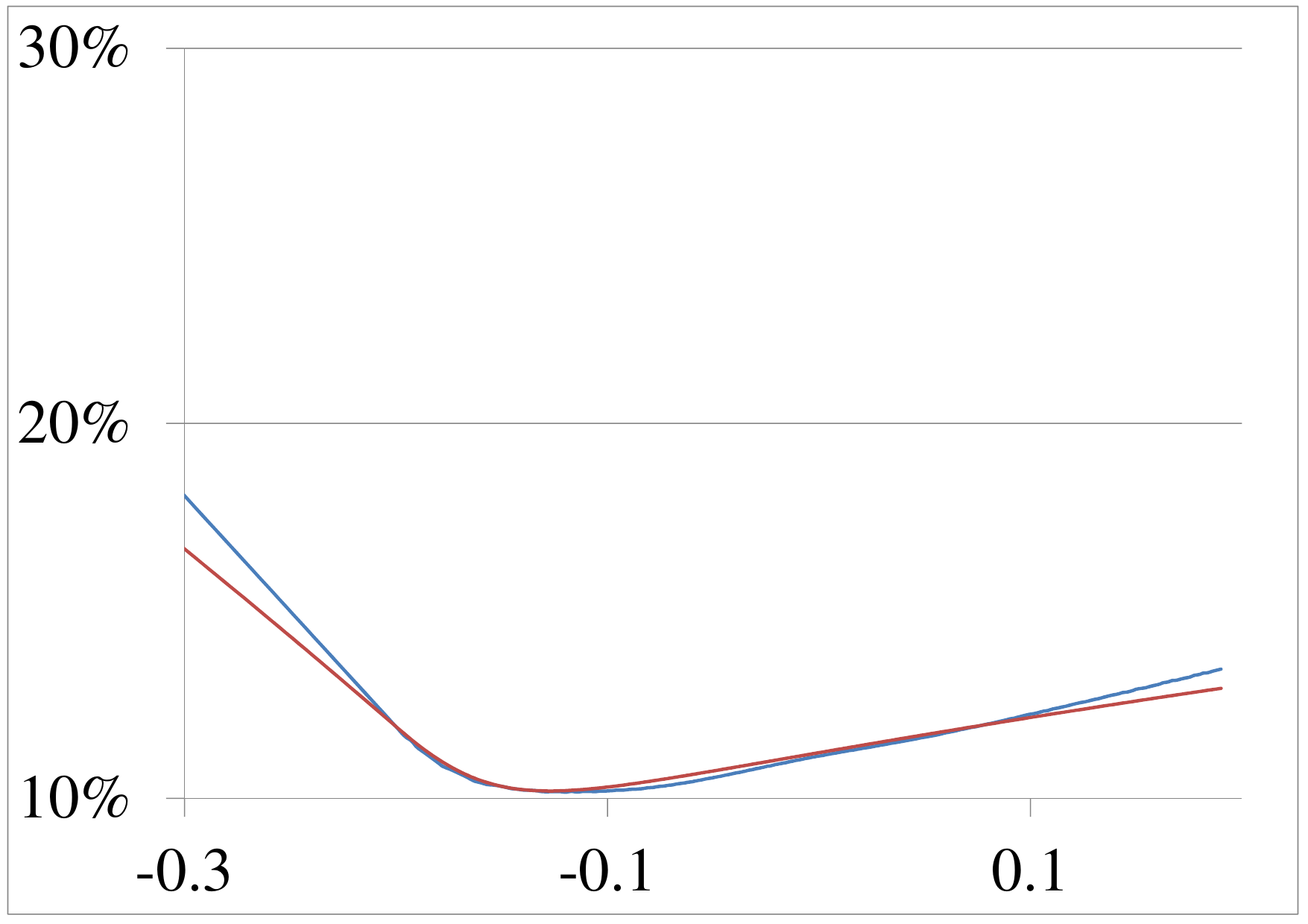}	
	\end{figure} 
	
	\subsection{A non-parametric calibration}
	In this section, we take a somewhat
	different approach.  Rather than assuming that the log-concave
density $f$ is a fixed parametric family, we use the 
results of section \ref{se:explore} to estimate $f$ non-parametrically. In particular, 
we assume that 
$$
C^{\mathrm{obs}}(\kappa, t_1) \approx C_{f }(\kappa,  \Upsilon(t_1) )  
 \mbox{ for all } \kappa \in \mathcal S_1,
$$	
where now the function $f$ is unknown.  Since the fit of the parametric
model was reasonably good, we will set $\Upsilon(t_1)$ to be
the same value $y_1$ found in section \ref{se:examples}.

Recall that Theorem \ref{th:generator}
says that
$$
C_f(e^{\varepsilon x}, \varepsilon) = \varepsilon H_f(x) + o(\varepsilon).
$$
It is straightforward to check that if $f$ satisfies as mild regularity condition
as in the hypothesis of Proposition \ref{th:PDE} then we have the
slightly improved asymptotic formula
$$
 C_f(e^{\varepsilon x}, \varepsilon)e^{-\varepsilon x/2}  = \varepsilon H_f(x) + o(\varepsilon^2).
$$

Hence, we will assume that 
$$
C_{f }(\kappa,  y_1 ) \kappa^{-1/2} \approx y_1 H_f( \log \kappa/y_1) 
$$
since $y_1$ is small.  Theorem \ref{th:dualofH} tells us that 
$$
f \circ F^{-1}(p) \approx \frac{1}{y_1}
\inf_{\kappa \in \mathcal{S}_1}  [ C^{\mathrm{obs}}(\kappa, t_1) \kappa^{-1/2}
+ p \log(\kappa)].
$$
An estimate of the density $f$ can now be computed numerically.  

Figure \ref{fi:pdf} compares $\log f$, when estimated non-parametrically
versus the calibrated parametric example from the last section.
Considering the fact that the non-parametric density is estimated from the earliest
maturity date, while the parametric density is calibrated using all three
maturity dates, the agreement is uncanny.

\begin{figure} 
\caption{$\log f$ estimated non-parametrically (blue), versus the parametric
fit (red). Both are centred so that their maxima are at the origin.
}
	  \label{fi:pdf}
		\includegraphics[ scale = 0.3]{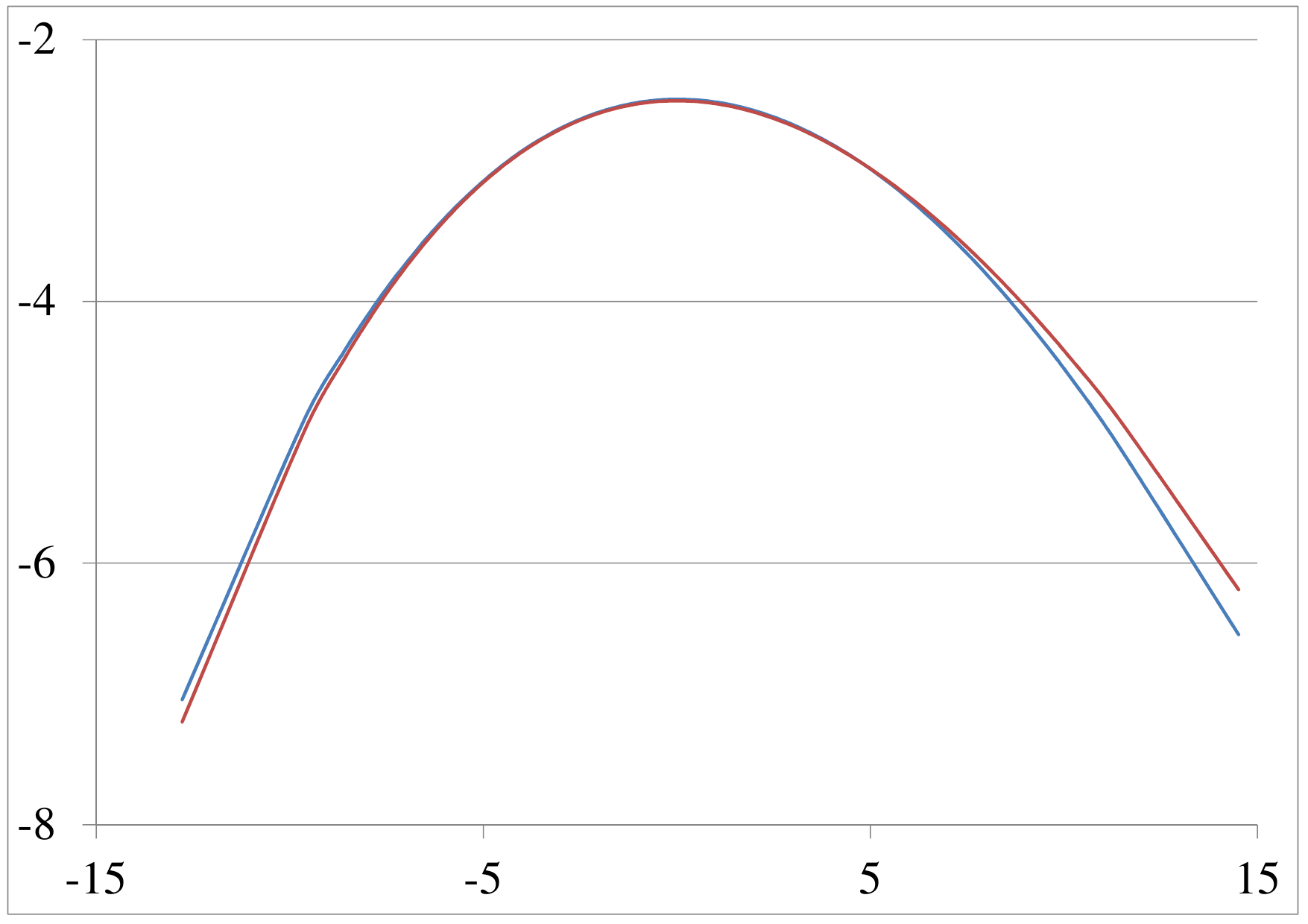}
	\end{figure} 

Given that the calibrated parametric density seems to recover market
date reasonably well, and that the non-parametric density agrees with
parametric reasonably well, it is natural to compare the market
implied volatility to that predicted by the non-parametric model.  Recall that 
the model 
call surface is determined by the density $f$ and the increasing 
function $\Upsilon$.	
We have estimated $f$ from the short maturity call prices
and the assumption that $\Upsilon(t_1) = y_1$, where $y_1$ was
found from the parametric calibration.  However, 
we still need to estimate the function $\Upsilon(t_i)$ for $i=2, 3$.
For a lack of a better idea, we let $\Upsilon(t_i) = y_i$ for $i=2,3$ as well.

Figure \ref{fi:nonparasmile} compares the market implied volatility
(the same as in figure \ref{fi:calibrated}), with the implied volatility
computed from the non-parametric model.  Since the estimated density $f$ is not
given by an explicit formula, I have used the formula in Theorem \ref{th:first-duality} to compute the call prices.
 Again, given that the  density $f$
is estimated using only the $t_1$ call price curve, it is interesting
that the model implied volatility for maturities $t_2$ and $t_3$ should match
the market data at all.

\begin{figure} 
\caption{Implied volatility vs. log-moneyness from market data (blue), 
 versus the non-parametrically estimated density (red) 
}
	  \label{fi:nonparasmile}
		\includegraphics[trim = 0cm 0 0cm 0, clip, scale = 0.17]{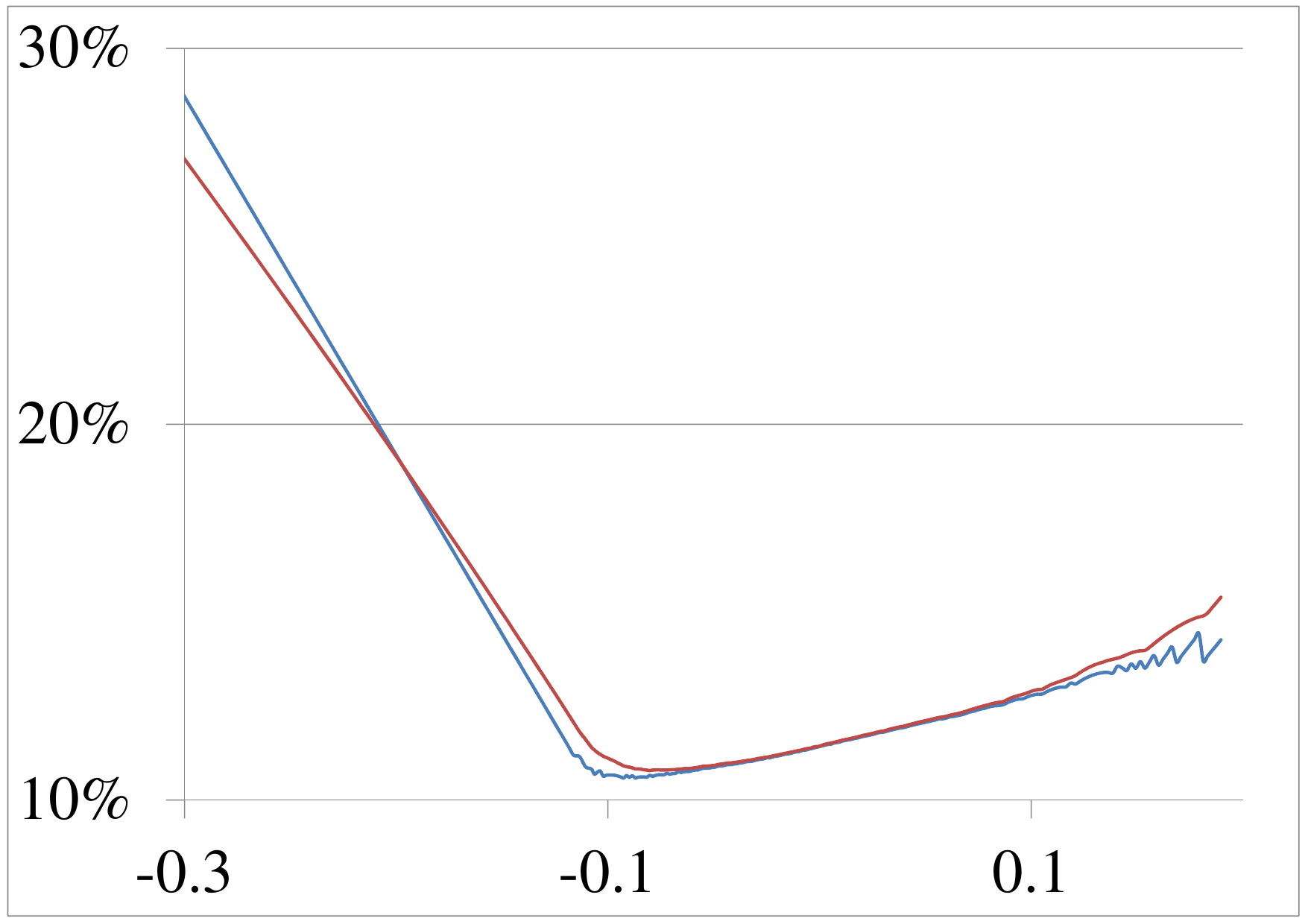}	
		\includegraphics[trim = 0cm 0 0cm 0, clip, scale = 0.17]{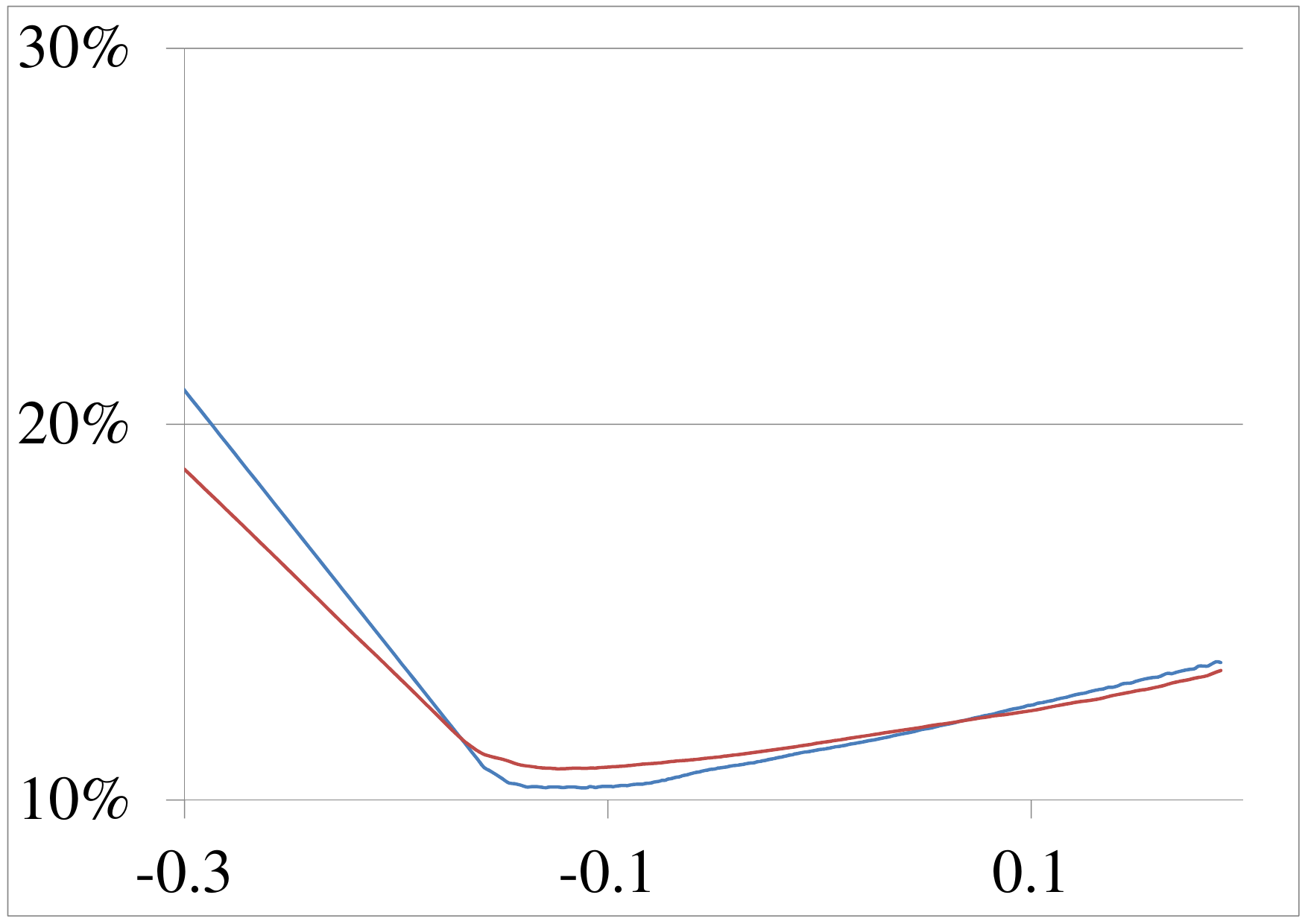}	
		\includegraphics[trim = 0cm 0 0cm 0, clip, scale = 0.17]{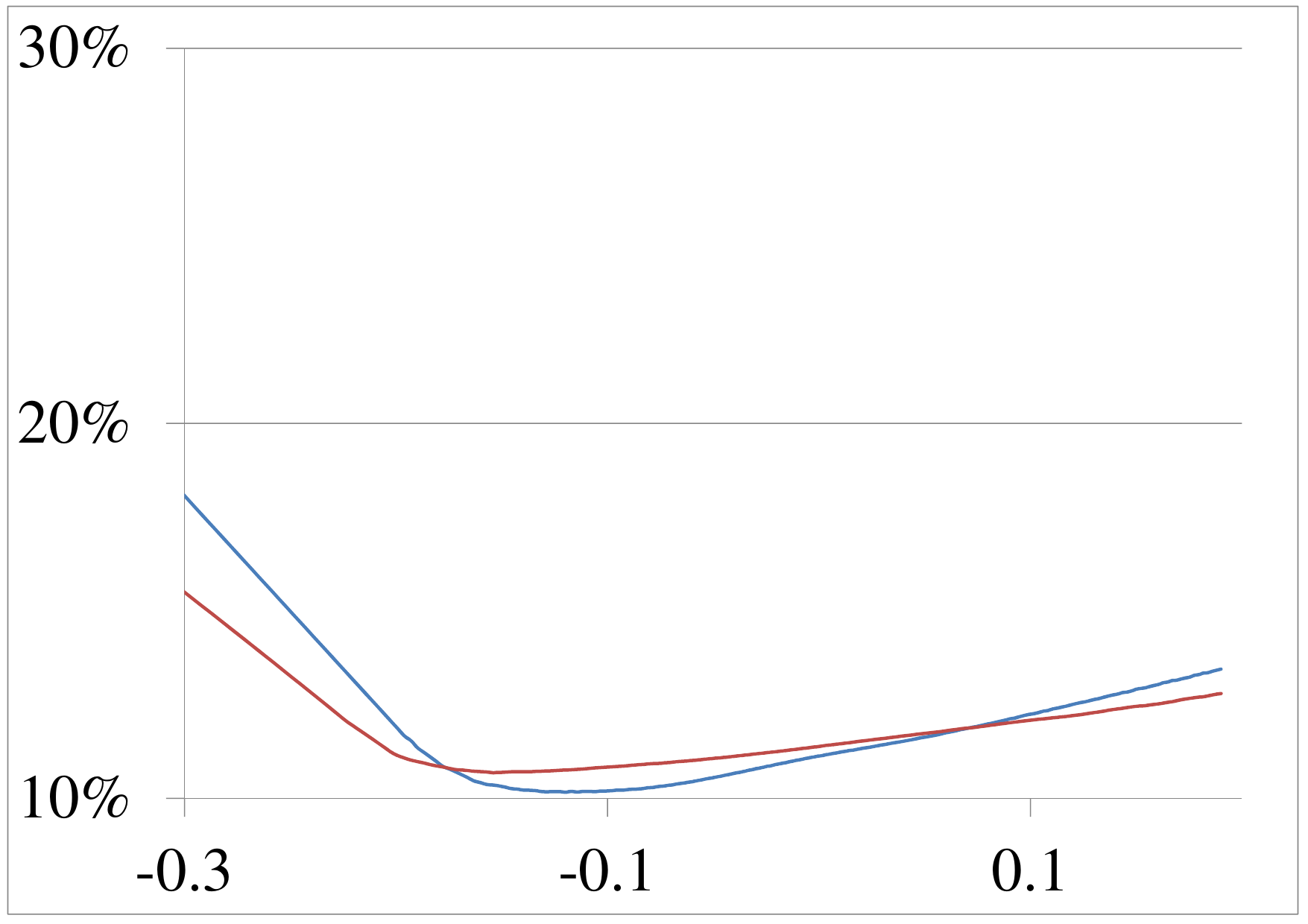}	
	\end{figure} 
	
\section{An isomorphism and lift zonoids}\label{se:proofs}
\subsection{The isomorophism}\label{se:iso}
In this section, to help understand the binary operation $\bullet$  on the space $\mathcal C$
we show that there is a nice isomorphism of $\mathcal C$ to 
another function space which 
converts the somewhat complicated operation $\bullet$ into simple function composition $\circ$.

We introduce a transformation $\hat{ \ }$ on the space $\mathcal C$ which will be particularly useful:
for $C \in \mathcal C$ we define a new function $\hat C$ on $[0,1]$ by the formula
$$
\hat C(p) = \inf_{\kappa \ge 0} [ C(\kappa) + p\kappa ] \mbox{ for } 0 \le p \le 1.
$$
We quickly note that 
the notation $\hat{ \ }$ introduced here is, in  fact, 
consistent with the prior occurrence of this notation in Section \ref{se:explore}.  Indeed, the connection between the  transformation $\hat{ \ }: \mathcal C \to \hat{ \mathcal C}$ defined here and 
the conclusion of Theorem \ref{th:dualofH} is explored in Section \ref{se:inf-conv} below.

\begin{figure}  	
\caption{A typical element of $\hat{\mathcal C}$}
	  \label{fi:hatC}
		\includegraphics[trim = 0cm 0 0cm 0, clip, scale = 0.35]{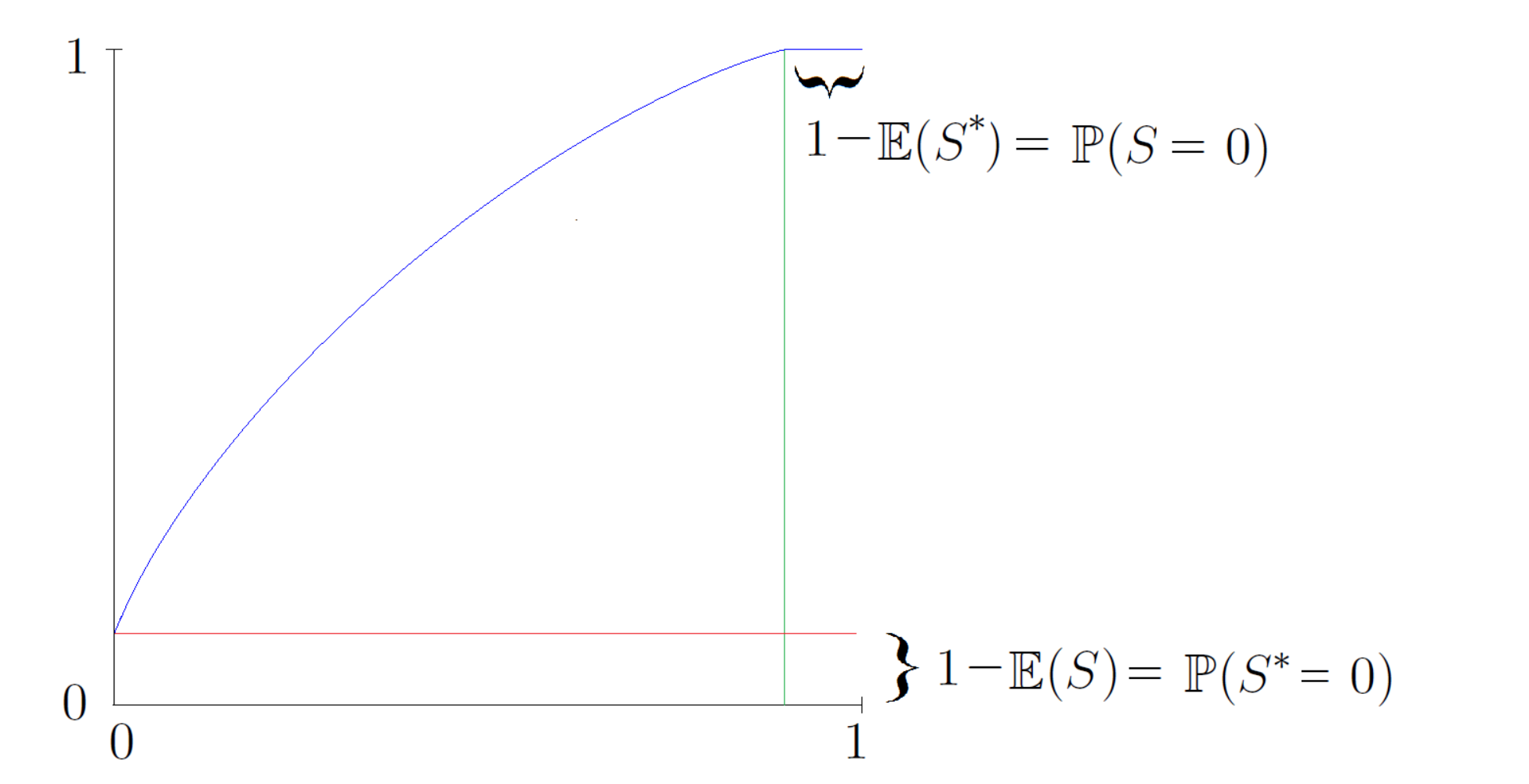}		
	\end{figure} 
	
	Given a call price curve $C \in \mathcal C$, we can immediately read off some properties of the new function $\hat C$.
The proof is routine, and hence omitted.
\begin{proposition}\label{th:propsofChat} Fix $C \in \mathcal C$ with primal representation $S$ and dual representation $S^*$.
\begin{enumerate}
\item $\hat C$ is non-decreasing and concave.
\item $\hat C$ is continuous and
$$
\hat C(0) =   C(\infty) = 1- \EE(S) = \PP(S^* = 0 ).
$$ 
\item For $0 \le p \le 1$ and $ \kappa \ge 0$ such that
$$
\PP(S > \kappa  ) \le p \le \PP(S \ge \kappa ),
$$ 
we have
$$
\hat C(p) =   C(\kappa) + p\kappa.
$$ 
\item
$
\min\{ p \ge 0: \hat C(p) = 1 \} =  - C'(0) = \PP(S > 0) = \EE(S^*).
$ 
\item $\hat C(p) \ge p$ for all $0 \le p \le 1$.
\end{enumerate}
\end{proposition}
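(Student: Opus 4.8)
The plan is to derive all five claims from two ingredients: the observation that $\hat C$ is a pointwise infimum of affine functions of $p$, and a single subdifferential computation for the minimisation over $\kappa$. (Concretely, $\hat C$ is, up to sign conventions, the concave conjugate of $C$ restricted to $[0,1]$.)

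For (1) I would note that for each fixed $\kappa \ge 0$ the map $p \mapsto C(\kappa) + \kappa p$ is affine with slope $\kappa \ge 0$, hence non-decreasing; an infimum of non-decreasing affine functions is itself non-decreasing and concave, which is (1) with no further work. For (5) I would bound the bracket below using the defining inequality $C(\kappa) \ge (1-\kappa)^+$ of $\mathcal C$, so that $\hat C(p) \ge \inf_{\kappa \ge 0}[(1-\kappa)^+ + \kappa p]$; splitting at $\kappa = 1$ shows this infimum equals $p$, attained at $\kappa = 1$, which gives (5). The same elementary computation yields $\hat C(1) = 1$, which is needed below.

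The crux is (3). For fixed $p$ the function $\kappa \mapsto C(\kappa) + p\kappa$ is convex, so its infimum is attained exactly where $0$ lies in its subdifferential. I would compute the one-sided derivatives of $C$ from the primal representation $C(\kappa) = 1 - \EE(S \wedge \kappa)$ of Proposition \ref{th:rep}: the right-derivative is $C'(\kappa) = -\PP(S > \kappa)$, and the analogous one-sided limit gives left-derivative $-\PP(S \ge \kappa)$. Hence the subdifferential of $\kappa \mapsto C(\kappa) + p\kappa$ at $\kappa$ is the interval $[\,p - \PP(S \ge \kappa),\, p - \PP(S > \kappa)\,]$, and it contains $0$ precisely when $\PP(S > \kappa) \le p \le \PP(S \ge \kappa)$ --- exactly the hypothesis of (3) --- so that $\hat C(p) = C(\kappa) + p\kappa$ there. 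I expect this left-derivative identification, together with checking that a minimiser exists and may be taken finite (for $p > 0$ the linear term forces the bracket to $+\infty$, while for $p = 0$ the infimum is $C(\infty)$), to be the main technical obstacle; everything else is a corollary.

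Claims (4) and (2) then follow. For (4), the equality $\hat C(p) = C(0) = 1$ is the case $\kappa = 0$ of the above: $\kappa = 0$ minimises iff $0 \in [\,p-1,\, p - \PP(S>0)\,]$, i.e.\ iff $p \ge \PP(S > 0)$, so the least such $p$ is $\PP(S>0) = -C'(0) = \EE(S^*)$, the last equalities coming from Proposition \ref{th:rep} and the remark following it. For (2), $\hat C(0) = \inf_\kappa C(\kappa) = C(\infty)$ since $C$ is non-increasing, and $C(\infty) = 1 - \EE(S) = \PP(S^* = 0)$ by the same proposition. Continuity on $(0,1)$ is automatic for a finite concave function; at the endpoints I would use squeeze arguments --- at $0$, monotonicity gives $\hat C(p) \ge \hat C(0)$ while choosing $\kappa$ with $C(\kappa)$ near $C(\infty)$ gives $\limsup_{p \downarrow 0} \hat C(p) \le C(\infty)$, and at $1$, part (5) gives $p \le \hat C(p) \le \hat C(1) = 1$, forcing $\hat C(p) \to 1$.
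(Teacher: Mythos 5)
Your proof is correct, and its core follows the paper's own argument: part (1) is the same observation that an infimum of non-decreasing affine functions of $p$ is non-decreasing and concave; part (3) is the same subdifferential characterisation of the minimiser (you usefully spell out the one-sided derivatives $C'_+(\kappa)=-\PP(S>\kappa)$ and $C'_-(\kappa)=-\PP(S\ge\kappa)$, which the paper leaves implicit); and part (2) uses the same squeeze $\hat C(0)\le \hat C(p)\le C(\kappa)+\kappa p$, sending $p\downarrow 0$ then $\kappa\uparrow\infty$. You diverge in two places, both legitimately. For (5) you compute $\inf_{\kappa\ge 0}[(1-\kappa)^+ + \kappa p]=p$ directly from the defining bound $C\ge E$, whereas the paper interpolates by concavity between $\hat C(0)\ge 0$ and $\hat C(1)=1$; both are one-line arguments, but yours does not depend on (2), which then lets you use (5) to get continuity at $p=1$ cleanly. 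For the harder half of (4) --- that $\hat C(p)<1$ when $p<\PP(S>0)$ --- you reuse the subdifferential condition at $\kappa=0$ (a strictly negative right derivative forces $\hat C(p)\le C(\varepsilon)+p\varepsilon<1$ for small $\varepsilon$) and then cite the identity $\PP(S>0)=-C'(0)=\EE(S^*)$ from Proposition \ref{th:rep}; the paper instead argues with the dual representation, choosing $N$ with $p<\EE(S^*\wedge N)$ and estimating $\hat C(p)\le C(1/N)+p/N = 1+(p-\EE(S^*\wedge N))/N<1$. Your route is arguably more economical since it recycles (3) and an identity already recorded in the paper; the paper's estimate is self-contained and shows concretely how the dual representation controls the behaviour of $\hat C$ near $p=\EE(S^*)$.
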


Figure \ref{fi:hatC} plots the graph of a typical element $\hat C \in \hat{\mathcal C}$.

The next result identifies the image $\hat{\mathcal C}$ of the map $\hat{ }$,
and further shows that $\hat{ \ }: \mathcal C \to \hat{\mathcal C}$ is a bijection:
\begin{theorem}\label{th:duality}
Suppose $g:[0,1] \to [0,1]$ is continuous and concave with $g(1)=1$.  Let
$$
C(\kappa) =  \max_{0 \le p \le 1} [ g(p) - p \kappa] \mbox{ for all } \kappa \ge 0.
$$
Then $C \in \mathcal C$ and 
$
g = \hat{C}.
$ 
\end{theorem}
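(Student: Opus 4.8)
The plan is to read Theorem \ref{th:duality} as a biconjugacy statement: the assignment $g \mapsto C$ is a concave Legendre--Fenchel transform adapted to the domains $[0,1]$ and $[0,\infty)$, and the hat operation $\hat{\ }$ is its inverse. I would proceed in two stages: first verify $C \in \mathcal C$, and then show that conjugating $C$ recovers $g$. The genuinely load-bearing hypotheses are the concavity of $g$ together with $g(1)=1$, and the whole argument is organised around extracting a supporting line whose slope is nonnegative.

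For the first stage, I would observe that $C(\kappa) = \max_{0\le p\le 1}[g(p)-p\kappa]$ is, for each $\kappa$, a maximum of affine functions of $\kappa$ (the maximum is attained because $[0,1]$ is compact and $g$ is continuous), so $C$ is convex. Taking $p=1$ gives $C(\kappa)\ge g(1)-\kappa = 1-\kappa$, and $p=0$ gives $C(\kappa)\ge g(0)\ge 0$, so $C(\kappa)\ge(1-\kappa)^+$; meanwhile $g(p)-p\kappa\le g(p)\le 1$ forces $C(\kappa)\le 1$. Hence $C\in\mathcal C$, which also licenses me to invoke Proposition \ref{th:propsofChat} later.

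For the second stage I would first record the easy inequality: for every $\kappa\ge 0$, choosing $q=p$ inside the maximum shows $C(\kappa)+p\kappa\ge g(p)$, so $\hat C(p)\ge g(p)$ for all $p\in[0,1]$. The reverse inequality is the crux. The key preliminary observation is that $g$ is \emph{nondecreasing}: a concave function whose maximal value $1$ is attained at the right endpoint cannot have negative slope anywhere, since concavity would then drive $g(1)$ strictly below an earlier value, contradicting $g(1)=1=\max g$. Consequently, at each interior point $p\in(0,1)$ there is a finite and \emph{nonnegative} supergradient $m$ with $g(q)\le g(p)+m(q-p)$ for all $q\in[0,1]$. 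Choosing $\kappa=m\ge 0$ then gives $C(m)=\max_q[g(q)-qm]\le g(p)-pm$, i.e. $C(m)+pm\le g(p)$, whence $\hat C(p)\le g(p)$. Combined with the easy inequality this yields $\hat C=g$ on $(0,1)$, and I would extend to the endpoints $p=0,1$ by continuity: Proposition \ref{th:propsofChat} makes $\hat C$ continuous on $[0,1]$, $g$ is continuous by hypothesis, and two continuous functions agreeing on the dense set $(0,1)$ agree everywhere.

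I expect the main obstacle to be precisely the nonnegativity of the supporting slope: the infimum defining $\hat C$ ranges only over $\kappa\ge 0$, so the supergradient used in the reverse inequality must be nonnegative, and this is exactly what $g(1)=1=\max g$ secures via the monotonicity observation. A secondary point requiring care is the endpoint $p=0$, where the supergradient may be $+\infty$; routing through continuity of $\hat C$ sidesteps this cleanly, although one could instead argue directly that $C(\kappa)\downarrow g(0)$ as $\kappa\to\infty$ using continuity of $g$ at $0$.
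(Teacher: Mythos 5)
Your proof is correct and follows essentially the same route as the paper's: verify $C\in\mathcal C$ via the $p=0$ and $p=1$ test points, get $g\le\hat C$ trivially, and obtain the reverse inequality by producing a \emph{nonnegative} supporting slope at each interior point (the paper inlines this as a chord-slope estimate showing $g'(p_0)\ge 0$, where you phrase it as monotonicity of $g$), plugging that slope in as the value of $\kappa$, and finishing at the endpoints by continuity of $\hat C$ and $g$. The only difference is cosmetic: you evaluate $C$ directly at $\kappa=m$ instead of first bounding $C(\kappa)$ for all $\kappa$ by the conjugate of the supporting line and then taking the infimum.
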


The above theorem is a minor variant of the Fenchel biconjugation theorem   of convex analysis. See the book of 
 Borwein \& Vanderwerff \cite[Theorem 2.4.4]{BV}.

The following theorem explains our interest
in the  bijection $\hat{ }$ : it converts the binary operation $\bullet$ to
function composition $\circ$.  
A version of this result can be found in the book of Borwein \& Vanderwerff
\cite[Exercise 2.4.31]{BV}.
\begin{theorem}\label{th:isomorphism}
For $C_1, C_2 \in \mathcal C$ we have 
$$
\widehat{C_1 \bullet C_2} = \hat{C_1} \circ \hat{C_2}
$$
\end{theorem}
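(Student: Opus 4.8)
The plan is to prove the identity by unwinding both sides into joint infima and then performing a single change of variables. Starting from the definitions, I would write
$$
\widehat{C_1 \bullet C_2}(p) = \inf_{\kappa \ge 0}\big[ (C_1 \bullet C_2)(\kappa) + p\kappa\big] = \inf_{\kappa \ge 0} \inf_{\eta > 0}\big[ C_1(\eta) + \eta\, C_2(\kappa/\eta) + p\kappa \big],
$$
and then combine the two infima into a single infimum over the product domain $(\kappa,\eta) \in [0,\infty) \times (0,\infty)$, which is always legitimate.

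The key step is the substitution $\mu = \kappa/\eta$. For each fixed $\eta > 0$ the map $\kappa \mapsto \kappa/\eta$ is a bijection of $[0,\infty)$ onto itself, so writing $\kappa = \eta \mu$ the joint infimum becomes
$$
\inf_{\eta > 0,\, \mu \ge 0}\big[ C_1(\eta) + \eta\, C_2(\mu) + p\, \eta \mu\big] = \inf_{\eta > 0}\Big[ C_1(\eta) + \eta \inf_{\mu \ge 0}\big( C_2(\mu) + p\mu\big)\Big] = \inf_{\eta > 0}\big[ C_1(\eta) + \eta\, \hat{C_2}(p)\big],
$$
where in the middle step I factor the non-negative constant $\eta$ out of the inner infimum and recognise the definition of $\hat{C_2}(p)$.

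It remains only to match this against $\hat{C_1}(\hat{C_2}(p)) = \inf_{\eta \ge 0}[C_1(\eta) + \hat{C_2}(p)\,\eta]$. Two small points must be checked. First, $\hat{C_2}(p)$ must lie in $[0,1]$ so that it is a legitimate argument for $\hat{C_1}$; this is immediate from Proposition \ref{th:propsofChat}, since $\hat{C_2}(p) \ge p \ge 0$ and, taking $\kappa = 0$ in the defining infimum, $\hat{C_2}(p) \le C_2(0) = 1$. Second, my expression has the infimum over $\eta > 0$ whereas $\hat{C_1}$ is an infimum over $\eta \ge 0$. The only difference is the point $\eta = 0$, whose contribution is $C_1(0) = 1$; but by the continuity of $C_1$ at $0$ this value equals the limit of $C_1(\eta) + \eta\, \hat{C_2}(p)$ as $\eta \downarrow 0$, so adjoining the endpoint cannot lower the infimum. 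Hence the two infima agree and the identity follows.

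The main obstacle — and really the only nonroutine point — is this boundary bookkeeping: the definition of $\bullet$ takes an infimum over $\eta > 0$ while $\hat{\ }$ takes an infimum over $\kappa \ge 0$, so one must verify that including or excluding the endpoint $\eta = 0$ is harmless, which rests on the continuity of $C_1$ established earlier. Everything else is a mechanical interchange and rescaling of infima.
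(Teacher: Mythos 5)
Your proof is correct and takes essentially the same route as the paper's: both unwind $\widehat{C_1 \bullet C_2}(p)$ into a double infimum, rescale the inner variable via $\kappa = \eta\mu$ to factor out $\eta\,\hat{C_2}(p)$, and invoke continuity of call functions at $0$ to handle the boundary (the paper does this at the outset by replacing $\inf_{\kappa \ge 0}$ with $\inf_{\kappa > 0}$, whereas you do it at the end when adjoining $\eta = 0$). The two checks you flag, that $\hat{C_2}(p) \in [0,1]$ and that the endpoint $\eta = 0$ is harmless, are exactly the points the paper's terser computation leaves implicit, so nothing is missing.
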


\begin{proof}  By the continuity of a function $C \in \mathcal C$ at
$\kappa=0$,  we 
have the equivalent expression 
$$
\hat C(p) = \inf_{\kappa > 0} [ C(\kappa) + p\kappa ] \mbox{ for } 0 \le p \le 1.
$$
Hence for any $0 \le p \le 1$ we have
\begin{align*}
\widehat{C_1 \bullet C_2}(p) & =  \inf_{\kappa > 0} [ C_1 \bullet C_2 (\kappa) + p\kappa ] \\ 
& =  \inf_{\kappa > 0} \{ \inf_{H > 0} [ C_1(H) + H C_2 (\kappa/H) ] + p\kappa \} \\ 
& =  \inf_{H > 0} \{ C_1(H) +  H \inf_{\kappa > 0} [ C_2 (\kappa)  + p\kappa] \} \\ 
& = \hat{C_1} \circ \hat{C_2}(p).
\end{align*}
\end{proof}

In light of Theorem \ref{th:isomorphism}, Theorem \ref{th:semigroup}
says that the set of conjugate functions
$\hat {\mathcal C}$ is a semigroup with respect to function composition
$\circ$, with identity element $\hat{E}(p) = p$ and absorbing element
$\hat Z(p) = 1$.   The involution on $\hat{\mathcal C}$ induced by
${}^*$  is identified in Theorem \ref{th:inverse} below.

In preparation for reproving Theorem \ref{th:peacock}  and proving Theorem \ref{th:classify}
we identify the image of the set of functions $C_f$ under the isomorphism $\hat{ \ }$.  As the notation introduced in section \ref{se:explore} suggests,
we have
$$
\widehat{C_f(\cdot, y)}(p)   = F( F^{-1}(p) + y)  
$$
by Theorems \ref{th:first-duality} and \ref{th:duality}.  
For notational ease, we will continue to use the notation
$$
\hat C_f(p,y) = \widehat{C_f (\cdot, y)}(p).
$$

\begin{proof}[Another proof of Theorem \ref{th:peacock}]
Note that the family of functions $ ( \hat C_f(\cdot, y) )_{y \ge 0}$ form
a semigroup with respect to function composition.  The result
follows from applying Theorems \ref{th:duality} and \ref{th:isomorphism}.
\end{proof}

We now come to proof of Theorem \ref{th:classify}.

\begin{proof}[Proof of Theorem \ref{th:classify} ]
If a function $C:[0,\infty)\times [0,\infty) \to [0,1]$
satisfies the hypotheses of the theorem, then the conjugate function $\hat C:[0,1]\times [0,\infty) \to [0,1]$
is such that 
$$
\hat C(p,0) = p \mbox{ for all } 0\le p \le 1
$$
and satisfies  the translation equation 
$$
\hat C( \hat C(p,y_1), y_2) = \hat C(p, y_1+y_2) \mbox{ for all } 0 \le p \le 1 \mbox{ and } y_1, y_2 \ge 0.
$$
 The conclusion of the theorem is that there are only three types of solutions to the above
functional equation such that $\hat C( \cdot, y) \in \hat{ \mathcal C}$ for all $y > 0$:
\begin{enumerate}
\item $\hat C(p,y) = p$ for all $0\le p \le 1$ and $y > 0$,
\item  $\hat C(p, y) = 1$  for all $0\le p \le 1$ and $y > 0$,
\item $\hat C(p, y) = F( F^{-1}(p) + y )$ for all $0\le p \le 1$ and $y > 0$
where $F(z) = \int_{-\infty}^z f(x) dx$ and $f$ is a log-concave probability density.
\end{enumerate}

Once we have ruled out cases (1) and (2), we can appeal to 
the result of  Cherny \& Filipovi\'c \cite{CF}: 
 concave solutions of the translation equation on $[0,1]$ are of the form
$
 G^{-1}( G(\cdot) + y )
$
 where 
$$
G(p) = \int_{p_0}^p \frac{ dq}{ \hat H(q) }
$$
for a positive concave function $\hat H$ and fixed $0 < p_0 <1$.
Note that for $0 < p < 1$ the integral is well-defined and finite as $\hat H$ is positive and continuous by concavity.
Let $L = G(0)$ and $R= G(1)$, and define a function $F:[L,R] \to [0,1]$ as the inverse function
$F = G^{-1}$, and extend $F$ to all of $\RR$ by $F(x) = 0$ for $x \le L$ and $F(x) = 1$ for $x  \ge R$.
Note that we can compute the derivative as
$$
F'(x) = \frac{1}{G' \circ G^{-1}(x)} = \hat H ( F(x) ) \mbox{ for all } x \in \RR.
$$
Setting $f= F'$, we have $\hat H = f \circ F^{-1}$.   
Since $\hat H$ is concave,  Bobkov's result Proposition \ref{th:bobkov} 
implies that $f$ is log-concave.

\end{proof}

\begin{remark} An earlier study of the translation equation  without
the concavity assumption can
be found in the book of Acz\'el    \cite[Chapter 6.1]{aczel}.
\end{remark}

\subsection{Infinitesimal generators and the inf-convolution}\label{se:inf-conv} In this section we briefly and informally discuss the connection between
the binary operation $\bullet$ defined in section \ref{se:binary}
and the well-known inf-convolution $\square$.

Let $f$ be a log-concave density with distribution function $F$, and
let
$$
\hat C(p, y) = F( F^{-1}(p) + y) \mbox{ for all } 0 \le p \le 1, y \ge 0.
$$
The content of Theorem \ref{th:classify} is that, aside from the
trivial and null semigroups,  the only semigroups of
$\hat {\mathcal C}$ with respect to composition are of the above form.
The infinitesimal generator is given by
$$
\left. \frac{\partial}{\partial y} \hat{C}(p, y) \right|_{y=0} = \hat H(p)
 \mbox{ for all } 0 \le p \le 1,
$$
where $\hat H = f \circ F^{-1}$  and  we have taken the version of $f$ which is continuous on its support $[L,R]$.  Note that this equation also
holds for the trivial semigroup with $\hat H = 0$.

The key property of the function $\hat H$ is that it is non-negative
and concave.  
Let 
$$
\hat{ \mathcal H} = \{ h:[0,1] \to [0,\infty), \mbox{ concave  } \}.
$$
Note that for every element of $\hat{\mathcal{H}}$, aside from $\hat H=0$,  one can assign a  unique (up to centring) log-concave density $f$ by the discussion of section \ref{se:explore}.

The space $\hat{\mathcal H}$   is closed under addition.  Furthermore, we have
for every non-null one-parameter semigroup $\hat C$ that
$$
\hat C(p, \varepsilon) \approx p + \varepsilon \hat H(p) \mbox{ for small } \varepsilon > 0 
$$
for some $\hat H \in \hat{\mathcal{H}}$.  Let $\hat C_1$ and $\hat C_2$ be
two such semigroups.  Note that
\begin{align*}
\hat C_1(C_2(p, \varepsilon), \varepsilon) 
&    \approx p + \varepsilon (\hat H_1(p) + \hat H_2(p) )
\end{align*}
implying that
function composition near 
the identity element of $\hat{\mathcal C}$ amounts to addition
in the space of generators $\hat{ \mathcal H}$.

Similarly, let 
$$
\mathcal H = \left\{ H: \RR \to [0,\infty) \mbox{ convex with } 0 \le  H(x) - (-x)^+ \le \mbox{ const. }  \right\}.
$$
For $H \in \mathcal H$, let
$$
\hat H(p) = \inf_{x \in \RR} [H(x) + xp] \mbox{ for } 0 \le p \le 1.
$$
One can check that 
$\hat{ \ }$ is a bijection between the
sets $\mathcal H$ and $\hat{\mathcal H}$  
by a version of the Fenchel biconjugation theorem.  In particular, the space $\mathcal H$ 
can be identified with the generators of one-parameter
 semigroups in $ {\mathcal C}$.

Recall that the inf-convolution of two    functions $f_1, f_2: \RR \to \RR$ is defined by
$$
(f_1 \square f_2)(x) = \inf_{y \in \RR} [ f_1(x-y) + f_2(y) ] \mbox{ for } x \in \RR.
$$
The basic property 
of the inf-convolution (see \cite[Exercise 2.3.15]{BV} for example)  is that it becomes
addition under conjugation: 
\begin{align*}
\widehat{f_1 \square f_2}(p) &= \inf_{x \in \RR}   \inf_{y \in \RR} [ f_1(x-y) + f_2(y)    + xp ] \\
& = \inf_{z \in \RR} [ f_1(z) + zp ] +  \inf_{y \in \RR} [   f_2(y)    + yp ] \\
& = \hat f_1(p) + \hat f_2(p),
\end{align*}
in analogy with 
Theorem \ref{th:isomorphism}.  Since there
is an exponential map lifting function addition $+$  to 
function composition $\circ$ in  $\hat{\mathcal C}$,  we can apply the 
isomorphism $\hat{ \ }$ to conclude that there is an exponential
map lifting inf-convolution $\square$  to 
the binary operation $\bullet$ in $\mathcal C$.

Indeed, let $C$ be a one parameter semigroup with generator $H$, so that
$$
C(e^{\varepsilon x}, \varepsilon) \approx \varepsilon H(x) \mbox{ for small }
\varepsilon > 0.
$$ 
Letting $C_1$ and $C_2$ be two such semigroups, we have
\begin{align*}
C_1( \cdot, \varepsilon) \bullet C_2( \cdot, \varepsilon)(e^{\varepsilon x})
&\approx  \varepsilon  \ \inf_y [ H_1(y) + e^{\varepsilon y}H_2(x-y)]  \\
& \approx \varepsilon H_1 \square H_2(x)
\end{align*}

\subsection{Lift zonoids}
Finally, to see why one might want to compute  the Legendre transform
of a call price  with respect to the strike parameter, we 
recall that the zonoid  of an integrable random
$d$-vector $X$ is the set 
$$
Z_X = \left\{ \EE[X g(X)  ]  \mbox{ measurable } g:\RR^d \to [0,1]  \right\} \subseteq \RR^d,
$$
and that the lift zonoid of $X$ is the zonoid of the $(1+d)$-vector $(1,X)$ given
by
$$
\hat Z_X =  \left\{ ( \EE[g(X)], \EE[ X g(X)   ] )  \mbox{ measurable } g:\RR^d \to [0,1]  \right\} \subseteq \RR^{1+d}.
$$
The notion of lift zonoid was introduced in
the paper of Koshevoy \& Mosler \cite{KM}.

In the case $d=1$, the lift zonoid $\hat Z_{X}$ is a convex set 
contained in the rectangle
$$
[0,1] \times [-m_-, m_+ ].
$$
where $m_{\pm} = \EE( X^{\pm}).$  The precise shape of this set 
is intimately related to call and put prices as seen in  the following theorem.
\begin{theorem}\label{th:zonoid}
Let $X$ be an integrable random variable. Its lift zonoid is given by
$$
\hat Z_{X}= \left\{ (p, q):    \sup_{x \in \RR}\{ px - \EE[ (x-X)^+ ] \}
\le q \le \inf_{x \in \RR} \{ px + \EE[ (X-x)^+ ] \}, \ \ 0 \le p \le 1   \right\}.
$$
\end{theorem}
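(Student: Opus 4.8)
The plan is to compute the support function of the lift zonoid and then recover the set as the intersection of its supporting half-planes. First I would record that $\hat Z_X$ is a compact convex subset of $\RR^2$: it is the image of the weak-$*$ compact convex set $\{g \in L^\infty : 0 \le g \le 1\}$ under the map $g \mapsto (\EE[g(X)], \EE[X g(X)])$, which is linear and weak-$*$ continuous because $1, X \in L^1$ (this is where integrability of $X$ enters). Being compact and convex, $\hat Z_X$ equals the intersection of all closed half-planes containing it; equivalently, $(p,q) \in \hat Z_X$ if and only if $sp + tq \le h(s,t)$ for every $(s,t) \in \RR^2$, where $h(s,t) = \sup_{(p,q) \in \hat Z_X}(sp+tq)$ is the support function.

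The key computation is that $h$ has a closed form. For any $(s,t)$,
$$
h(s,t) = \sup_{0 \le g \le 1} \EE[g(X)(s+tX)] = \EE[(s+tX)^+],
$$
the supremum being attained by the bang-bang choice $g = \one_{\{s+tX > 0\}}$, obtained by maximising the integrand pointwise. Hence $(p,q) \in \hat Z_X$ if and only if $sp + tq \le \EE[(s+tX)^+]$ for all $(s,t) \in \RR^2$.

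It then remains to rewrite this family of inequalities in the stated form. Since both sides are positively homogeneous of degree one in $(s,t)$, it suffices to impose the constraints for $t \in \{0, 1, -1\}$. Taking $t = 0$ with $s = \pm 1$ gives $0 \le p \le 1$. Taking $t = 1$ and substituting $s = -x$ gives $q \le xp + \EE[(X-x)^+]$ for every $x \in \RR$, hence $q \le \inf_x \{xp + \EE[(X-x)^+]\}$. Taking $t = -1$ and substituting $s = x$ gives $q \ge xp - \EE[(x-X)^+]$ for every $x$, hence $q \ge \sup_x \{xp - \EE[(x-X)^+]\}$. These three groups of constraints are precisely the description in the statement.

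I expect the main thing to get right to be the closedness (equivalently, compactness) of $\hat Z_X$, supplied by the weak-$*$ compactness argument above. Without it one would obtain only the inclusion $\hat Z_X \subseteq \{\ldots\}$; the reverse inclusion — that every $(p,q)$ satisfying the inequalities is genuinely of the form $(\EE[g(X)], \EE[X g(X)])$ for some admissible $g$ — is exactly what the support-function characterisation of a \emph{closed} convex set delivers. By contrast, the pointwise bang-bang evaluation of $h$ and the homogeneity reduction to three directions are routine, and notably require no assumption on the atoms of the law of $X$; a more hands-on alternative would fix $p$ and extremise $\EE[X g(X)]$ subject to $\EE[g(X)] = p$ via a Neyman--Pearson argument, but the support-function route sidesteps the attendant care with boundary randomisation.
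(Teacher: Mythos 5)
Your proof is correct, but it follows a genuinely different route from the paper's. The paper argues via the Neyman--Pearson lemma: for any admissible $g$ and any $x \in \RR$ one has the pointwise inequality $X g(X) \le (X-x)^+ + x g(X)$, with equality precisely when $\one_{(x,\infty)} \le g \le \one_{[x,\infty)}$; taking expectations gives your upper constraint for every point of $\hat Z_X$, and choosing $x$ to be a quantile with $\PP(X > x) \le p \le \PP(X \ge x)$ (randomising $g$ on the atom $\{X = x\}$ so that $\EE[g(X)] = p$) shows that for each fixed $p$ the upper bound is attained by a zonoid point; the lower bound then comes for free from the symmetry $g \mapsto 1-g$, i.e.\ $(p,q) \in \hat Z_X$ if and only if $(1-p, \EE(X) - q) \in \hat Z_X$, and convexity of $\hat Z_X$ makes each slice the full interval. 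You instead compute the global support function $h(s,t) = \EE[(s+tX)^+]$, invoke weak-$*$ compactness of $\{g : 0 \le g \le 1\}$ and the Hahn--Banach characterisation of a compact convex set as the intersection of its supporting half-planes, and then reduce by positive homogeneity to the directions $t \in \{0, 1, -1\}$. The trade-off is as you yourself identify: your route avoids any care with boundary randomisation at atoms, since maximising a linear functional in a fixed direction is achieved by a pure indicator $\one_{\{s+tX>0\}}$, but it must import the functional-analytic closedness input (Banach--Alaoglu plus weak-$*$ continuity of pairing against $1, X \in L^1$); the paper's route is more elementary and probabilistic, needing no topology on $L^\infty$, and it yields as a by-product the equality conditions identifying \emph{which} $x$ saturates the bound at a given $p$ --- the same quantile condition as in Proposition \ref{th:propsofChat}(3), which the paper exploits again in the proof of Theorem \ref{th:inverse}. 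Both arguments leave a routine step implicit: the paper, that convexity makes the slices of $\hat Z_X$ intervals; yours, that $\{0 \le g \le 1\}$ is weak-$*$ closed.
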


Note that if we let 
$$
\Theta(x) = \PP(X \ge x)
$$
then we have
$$
\EE[ (X-x)^+ ]  = \int_x^{\infty} \Theta(\xi) d\xi
$$
by Fubini's theorem.  Also 
if we define the inverse function $\Theta^{-1}$ for $0 < p < 1$ by
$$
\Theta^{-1}(p) = \inf\{ x: \Theta(x) \ge p \}
$$
then by a result of Koshevoy \& Mosler  \cite[Lemma 3.1]{KM} we have
$$
\hat Z_{X}= \left\{ (p, q):   \int_{1-p}^1 \Theta^{-1}(\phi) d\phi \le q \le 
 \int_{0}^p  \Theta^{-1}(\phi) d\phi, \ \ 0 \le p \le 1   \right\}.
$$
from which Theorem \ref{th:zonoid} can be proven by Young's inequality.
However since the result can be viewed as an 
application of the Neyman--Pearson lemma, we include a short proof for completeness. 
\begin{proof}
For any measurable function $g$ valued in $[0,1]$ and $x \in \RR$ 
we have 
$$
X g(X) \le  (X-x)^+ + x g(X)
$$
with equality when
$g$ is such that
$$
\one_{(x, \infty) } \le g  \le   \one_{[x, \infty)}. 
$$

Now suppose $(p,q) \in \hat Z_X$ so that $p = \EE[ g(X) ]$ and
$q = \EE[ X g(X) ]$ for some $g$.  Hence, computing 
expectations in the inequality above yields
$$
q \le \EE[ (X-x)^+ ] + xp.
$$
with equality if
$$
\PP(X > x ) \le p \le \PP(X \ge x ).
$$

By replacing $g$ with $1-g$, we see that $(p,q) \in \hat Z_X$ if and only if
$(1-p, \EE(X) - q ) \in \hat Z_X$, yielding the lower bound.
\end{proof}

We remark that the explicit connection between lift zonoids and the price of call options has
been noted before, for instance  in the paper of Mochanov \& Schmutz \cite{MS}.  
In the setting of this paper,  given $C \in \mathcal C$   represented by $S$,  
the lift zonoid of $S$ is given by the set
$$
\hat Z_S = \{ (p,q) :  1 - \hat C(1-p) \le q \le \EE(S) -1 + \hat C(p), \ \ 0 \le p \le 1 \}
$$ 

We recall that a random vector $X_1$ is dominated by $X_2$
in the lift zonoid order if $\hat Z_{X_1} \subseteq \hat Z_{X_2}$.  
Koshevoy \& Mosler \cite[Theorem 5.2]{KM} noticed that in the $d=1$ case, that
the lift zonoid order is exactly the convex order.   

We conclude this section by exploiting Theorem \ref{th:zonoid} to
obtain an interesting identity.  A similar formula can be found in the
paper of  Hiriart-Urruty  \& Mart\'{i}nez-Legaz  \cite{HM}.

\begin{theorem}\label{th:inverse}
Given $C \in \mathcal C$, let 
$$
\hat C^{-1}(q) = \inf\{ p \ge 0: \hat C(p) \ge q \} \mbox{ for all } 0 \le q \le 1.
$$
Then 
$$
\widehat{ C^*} (p ) = 1 - \hat C^{-1}(1-p) \mbox{ for all } 0 \le p \le 1.
$$
\end{theorem}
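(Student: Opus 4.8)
The plan is to sidestep the lift-zonoid machinery and prove the identity by a direct computation, matching $\widehat{C^*}$ against a supremum-representation of $\hat C^{-1}$ that follows immediately from the definition of $\hat C$. First I would record the elementary observation that, since $\hat C(p) = \inf_{\kappa \ge 0}[C(\kappa) + p\kappa]$ is an infimum, the condition $\hat C(p) \ge q$ holds if and only if $C(\kappa) + p\kappa \ge q$ for \emph{every} $\kappa \ge 0$. For $\kappa = 0$ this reads $q \le 1$, which is automatic, while for $\kappa > 0$ it rearranges to $p \ge \frac{q - C(\kappa)}{\kappa}$. Hence the set $\{p \ge 0 : \hat C(p) \ge q\}$ is the half-line $\big[\,\sup_{\kappa > 0}\frac{q - C(\kappa)}{\kappa}, \infty\big)$, and since the ratio tends to $0$ as $\kappa \uparrow \infty$ the supremum is non-negative, giving
$$
\hat C^{-1}(q) = \sup_{\kappa > 0} \frac{q - C(\kappa)}{\kappa} \mbox{ for } 0 \le q \le 1.
$$

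Next I would compute $\widehat{C^*}$ from the definition of the involution. Using $C^*(\kappa) = 1 - \kappa + \kappa C(1/\kappa)$ for $\kappa > 0$ and $C^*(0) = 1$, the bracket defining $\widehat{C^*}(p)$ becomes, for $\kappa > 0$,
$$
C^*(\kappa) + p\kappa = 1 + \kappa\big(C(1/\kappa) - (1-p)\big),
$$
and the substitution $\eta = 1/\kappa$ turns this into $1 + \frac{C(\eta) - (1-p)}{\eta}$. The $\kappa = 0$ term contributes the value $1$, which is recovered as the limit $\eta \uparrow \infty$ of the transformed expression; taking the infimum over $\eta > 0$ therefore yields
$$
\widehat{C^*}(p) = 1 - \sup_{\eta > 0} \frac{(1-p) - C(\eta)}{\eta}.
$$
Comparing this with the supremum-formula for $\hat C^{-1}$ evaluated at $q = 1-p$ gives $\widehat{C^*}(p) = 1 - \hat C^{-1}(1-p)$, as claimed, and the edge cases $p=0,1$ can be checked separately using $\widehat{C^*}(1)=1$ and continuity.

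The calculation is short, so the only care needed is in the bookkeeping of the two degenerate contributions: confirming on the $\hat C^{-1}$ side that the $\kappa=0$ constraint is automatic because $q \le 1$, and confirming on the $\widehat{C^*}$ side that the $\kappa = 0$ endpoint value $1$ is subsumed by the $\eta \uparrow \infty$ limit (here one uses $\inf_{\eta>0}\frac{C(\eta)-(1-p)}{\eta} \le 0$, so no spurious minimum with the constant $1$ survives). This is the main, and genuinely modest, obstacle; there is no deeper difficulty. Conceptually I would present the direct computation as the proof while noting that the identity is the analytic shadow of a geometric fact: by Theorem \ref{th:zonoid} the involution ${}^*$, which exchanges the primal and dual representations $S \leftrightarrow S^*$, interchanges the two coordinate axes (probability mass versus price) of the lift zonoid, so that inverting the concave boundary function $\hat C$ is precisely what carries the boundary of $\hat Z_S$ into that of $\hat Z_{S^*}$.
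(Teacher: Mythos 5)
Your proof is correct, and it takes a genuinely different route from the paper's. You argue by pure convex analysis: the observation that $\hat C(p)\ge q$ holds iff $C(\kappa)+p\kappa\ge q$ for \emph{every} $\kappa\ge 0$ converts the generalised inverse into the explicit formula $\hat C^{-1}(q)=\sup_{\kappa>0}\frac{q-C(\kappa)}{\kappa}$, and the substitution $\eta=1/\kappa$ in the definition of the involution reduces $\widehat{C^*}(p)$ to $1$ minus the same supremum at $q=1-p$. All steps check out, including the two boundary points you flag: the $\kappa=0$ constraint is vacuous because $q\le 1$, and the constant $1$ coming from the $\kappa=0$ endpoint in $\widehat{C^*}$ is dominated because the transformed expression tends to $1$ as $\eta\uparrow\infty$; in fact your computation is uniform in $p\in[0,1]$, so the separate edge-case check at $p=0,1$ is not even needed. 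The paper instead argues probabilistically: it uses the zonoid-type representation $\hat C(p)-\hat C(0)=\sup\{\EE[Sg(S)] : g:\RR\to[0,1],\ \EE[g(S)]=p\}$ (the content of Theorem \ref{th:zonoid}, i.e.\ a Neyman--Pearson argument), reads $\hat C^{-1}$ as the inverse optimisation (minimise the mass $\EE[g(S)]$ at fixed price $\EE[Sg(S)]$), and then passes from the primal representation $S$ to the dual representation $S^*$ via the identity $\EE[\psi(S)\one_{\{S>0\}}]=\EE[S^*\psi(1/S^*)\one_{\{S^*>0\}}]$, taking care that the optimal $g$ puts no weight on $\{S^*=0\}$. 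Your approach is more elementary and self-contained: it needs no representation theorem and avoids the atom-at-zero subtlety entirely. What the paper's proof buys is the geometric picture --- inverting the concave boundary $\hat C$ is exactly the swap of the probability and price axes of the lift zonoid, i.e.\ the swap $S\leftrightarrow S^*$ --- which is precisely the interpretation you append as a closing remark, so your writeup in effect delivers both the short proof and the paper's conceptual explanation.
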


\begin{proof}
Let $S$ be a primal representation and $S^*$ be a dual representation of $C$.

Note that for all $0 \le p\le 1$ we have
$$
\hat C(p) - \hat C(0) = \sup\{ \EE[ S g(S)] : g:\RR \to [0,1]
 \mbox{ with }  \EE[g(S) ] = p \}
$$
and hence for any $0 \le q \le 1$ we have
\begin{align*}
\hat C^{-1}(q) & = \inf\{ \EE[ g(S) ], \  g:\RR \to [0,1]
 \mbox{ with } \EE[ S g(S) ] = q - \hat C(0) \} \\
& = 1- \sup\{ \EE[ g(S) ], \  g:\RR \to [0,1]
 \mbox{ with } \EE[ S g(S) ] = 1 -q \} \\
& = \PP(S>0) - \sup\{ \EE[ g(S) \one_{\{ S > 0\}} ], \  g:\RR \to [0,1]
 \mbox{ with } \EE[ S g(S) \one_{\{ S > 0\}} ] = 1 -q \} \\
& = \EE(S^*)  - \sup\left\{ \EE\left[ S^* g(S^*) \one_{\{ S^* > 0\}}  \right], \  g:\RR \to [0,1] \mbox{ with } 
\EE\left[   g(S^*) \one_{\{ S^* > 0\}}  \right] = 1- q  \right\} \\
& = 1 - \widehat{C^*}(1-q)
\end{align*} 
where we have used the observation that the optimal $g$ in the final maximisation problem
assigns zero weight to the event $\{ S^* = 0 \}$.
\end{proof}

\subsection{An extension}
Let $F$ be the distribution function of a log-concave density $f$ which is supported 
on all of $\RR$, so that $L=-\infty$ and $R = + \infty$ in the notation of section \ref{se:peacock}.
Let
$$
\hat C_f(p,y) = F( F^{-1}(p) + y ) \mbox{ for all } 0 \le p \le 1, y \in \RR.
$$
By Theorem \ref{th:first-duality} we have
$$
\hat C_f(p,y) = \widehat{C_f(\cdot, y)}(p) \mbox{ for all } 0 \le p \le 1, y \ge 0.
$$
It is interesting to note that the family of functions $(\hat C_f(\cdot ,y))_{y \in \RR}$ 
is a \textit{group} under function composition, not just a semigroup.  Indeed, we have
$$
\hat C_f(\cdot ,-y) = \hat C_f(\cdot ,y)^{-1}  \mbox{ for all } y \in \RR.
$$
Note that $\hat C_f(\cdot ,y)$ is increasing for all $y$, is concave if $y \ge 0$ but is
convex if $y < 0$.   In particular, when $y<0$ the function $\hat C_f(\cdot ,y)$ is
\textit{not} the concave conjugate of a call function in $\mathcal C$.
Unfortunately, the probabilistic or financial interpretation of the
inverse is not readily apparent.

For comparison, note that for $y \ge 0$ we have by Theorem \ref{th:inverse} that
\begin{align*}
\widehat{C_f(\cdot, -y)}(p) & = \widehat{C_f(\cdot, y)^*}(p) \\
& = 1 - F( F^{-1}(1-p) -y )  \mbox{ for all } 0 \le p \le 1. 
\end{align*}

\section{Acknowledgement} 
I would like to thank the Cambridge Endowment for Research in Finance for 
their support. I would also like to thank Thorsten Rheinl\"{a}nder for introducing 
me to the notion of a lift zonoid, and Monique Jeanblanc for introducing me to the notion of a lyrebird.   I would like to
thank the  participants of the London Mathematical Finance Seminar Series
and the Oberwolfach Workshop on the Mathematics of Quantitative Finance, where this work 
was presented.   After the original submission of this work, I learned that
Peter Carr and Greg Pelts \cite{Carr} independently proposed modelling call 
price curves
via their Legendre transform. I would like to thank Johannes Ruf for noticing
 this connection.
 
 I would also like to thank Johannes for a useful discussion 
 of the implication (5)$\Rightarrow$ (1) in Theorem \ref{th:char-surf}.  Originally, I had a complicated proof of this implication only in 
 the discrete-time case. The original argument was
 similar to the construction in the proof of Theorem \ref{th:bullet-S}: given the implication (5) $\Rightarrow$ (2),
 to apply the discrete-time It\^o--Watanabe decomposition to the supermartingale $S$, as in the construction
of F\"ollmer's exit measure. The difficulty in extending this argument to 
the continuous-time case  is that the local martingale appearing in the continuous-time It\^o--Watanabe decomposition may not be a true martingale.  While discussing this
technical point, Johannes inspired me to try to prove 
that, in fact, the stronger implication (5) $\Rightarrow$   (4) holds.  
 
  Finally, I would like to thank the referees for their useful comments on the content and presentation of this work. In particular, I thank them for 
  encouraging me to expand section \ref{se:calibrating}.

\end{document}